\newcommand{\Av}[2] {\left \langle  {#2} \right \rangle_{#1}}
\newcommand{\E} {{\mathbb E}}
\newcommand{\Z} {{\mathbb Z}}
\newcommand{\R} {{\mathbb R}}
\newcommand{\om} {{\omega}}
\newcommand{\eps} {\epsilon}
\newcommand{\mness} {\mathcal M}
\newcommand{\pa}[1] {#1_{\text{past}}}
\newcommand{\fu}[1] {#1_{\text{future}}}
\newcommand{\sple}[1] {\widetilde{#1}} 
\newcommand{\sha} {H_\eta(\pa{\sple{x}})}
\newtheoremstyle{break}
  {\topsep}{\topsep}%
  {\itshape}{}%
  {\bfseries}{}%
  {\newline}{}%
\theoremstyle{break}
\newtheorem{theorem}{Theorem}
\newtheorem{lemma}[theorem]{Lemma}
\newtheorem{proposition}[theorem]{Proposition}
\begin{document}

\title{Path Shadowing Monte-Carlo}

\author{
Rudy Morel$^a$, 
St\'ephane Mallat$^{a,b}$ \& 
Jean-Philippe Bouchaud$^c$
\\
\vspace{0.2cm}
\textit{
$^a$\'Ecole Normale Sup\'erieure, Paris, France
\\
$^b$Coll\`ege de France, Paris, France
\\
$^c$Capital Fund Management, Paris, France}
}

\maketitle

\begin{abstract}
We introduce a \textit{Path Shadowing Monte-Carlo} method, which 
provides prediction of future paths, given any generative 
model.
At any given date, it averages future quantities over generated price paths whose past history matches, or `shadows', the actual (observed) history. 
We test our approach using paths generated from a maximum entropy model of financial prices, based on a recently proposed multi-scale analogue of the standard skewness and kurtosis called `Scattering Spectra'~\cite{morel2022scale}. 
This model promotes diversity of generated paths while reproducing the main statistical properties of financial prices, including  stylized facts on volatility roughness. Our method yields state-of-the-art predictions for future realized volatility and allows one to determine conditional option smiles for the S\&P500 that outperform both the current version of the Path-Dependent Volatility model and the option market itself
\footnote{This work is supported by the PRAIRIE 3IA Institute of the French ANR-19-P3IA-0001 program and the ENS-CFM models and data science chair.}.
\end{abstract}

\begin{IEEEkeywords}
volatility prediction, option pricing, wavelets
\end{IEEEkeywords}

\section{Introduction}

Modelling future price scenarios is crucial for risk control, for pricing and hedging contingent claims (like options), and, possibly, for detecting arbitrage opportunities. Recently, machine learning auto-regressive models~\cite{oord2016wavenet,vaswani2017attention,wen2022transformers} manage to learn from data the distribution $p(x|\pa{x})$ of log-prices $x$ conditioned on past history $\pa{x}$. When trained with a prediction loss, such models generally achieve excellent prediction results. However, their training requires very large amount of data which are usually not available for financial prices.

On the other hand, low-parameterized generative models, i.e. models $p_\theta(x)$ of $p(x)$ with few parameters $\theta$, have been extensively studied in the financial literature~\cite{Heston1993ACS,bacry2013log,gatheral2018volatility,wu2022sfbm,guyon2021volatility}. However, two main challenges come to the fore. 
First, these models may not reproduce some important statistics of real financial prices due to oversimplified or flawed assumptions, or due to the fact that they are calibrated on external data such as observed option smiles. Second, it may not be straightforward to condition these models on the realized past at a specific date, in other words, obtaining a model of $p(x|\pa{x})$. Whereas conditioning is eased by considering Markovian models with a small number of factors~\cite{Heston1993ACS,guyon2021volatility}, such a strong assumption is often much too simplistic. 

\begin{figure}[!t]
\centering
\includegraphics[width=0.9\linewidth]{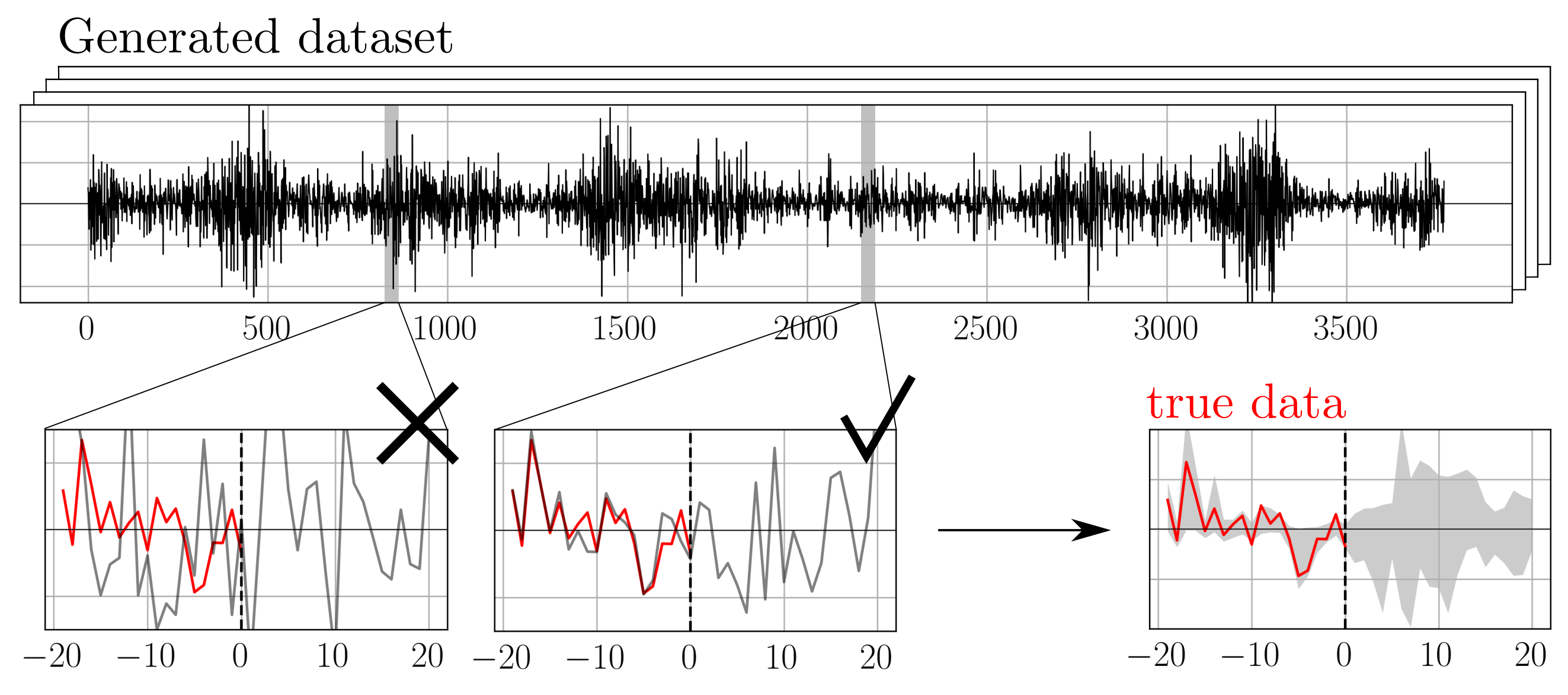}
\caption{Path shadowing. Given current past history $\pa{\sple{x}}$ (red), we scan for paths $x$ (gray) in a generated dataset whose past history satisfies $\pa{x}\approx\pa{\sple{x}}$. 
Such paths $x$ are said to \textit{shadow} $\pa{\sple{x}}$, they provide insights on the future. 
Predictions are obtained through Monte-Carlo on such shadowing paths.
}
\label{fig:shadowing}
\end{figure}

In this paper, we attempt to address both challenges.
Our main contribution is to introduce a new method, that we call \textit{Path Shadowing Monte-Carlo} (PS-MC), which can be used within any generative model of $p(x)$ to yield a model of $p(x|\pa{x})$.
Our approach for modelling the distribution $p(x)$, summarized in section \ref{sec:a-statistical-model}, is to define a minimal set of statistics describing financial prices that should be reproduced by the generating process. 
Such statistics should be well-estimated on limited data, while specifying `relevant' properties of the process, in a sense made precise below.
This bias-variance tradeoff was addressed in our previous work~\cite{morel2022scale} in the general case of multi-scale processes, where it was shown that a good description of financial prices can be achieved by multi-scale analogues of the classical skewness and kurtosis, called \textit{Scattering Spectra}, which is motivated in section \ref{sec:a-statistical-model}. 

A model based on these statistics captures all important stylized facts such as fat-tail distributions, intermittency, leverage effect and the `Zumbach effect'~\cite{morel2022scale}. Section \ref{sec:avg-smile} characterizes the average shape of option smiles generated by our Scattering Spectra model and shows that it accounts in particular for the power-law behaviour of the at-the-money skew as a function of maturity, which were recently argued to be a specific feature of rough volatility models~\cite{gatheral2018volatility, fukasawa2017short}. The power-law behaviour of the kurtosis, first reported in \cite{potters1998financial}, is also remarkably well accounted for. 

`Path shadowing' is presented in section \ref{sec:path-shadowing}. It consists in softening the conditioning on a given past history $\pa{x}$. In a nutshell, 
it amounts to scanning a large generated dataset, in search of paths whose history closely `shadows' the actual history (see Fig.~\ref{fig:shadowing} for an illustration).
A Path Shadowing Monte-Carlo method then averages the quantity of interest over the future of such matching paths. 
The term `shadowing' is freely inspired by the shadowing principle in chaotic dynamical systems~\cite{anosov1969geodesic,sinai1972gibbs,bowen1975omega,hammel1987numerical}. Intuitively, it states that a path which is uniformly close to a true orbit will stay close (shadow) a true path for all time.

This method can effectively be seen as a kernel method, with a causal path embedding to reduce the dimensionality of recent past 
history.
Unlike other recent kernel methods, such as signature kernels~\cite{salvi2021higher,alden2022model} that rely on a low-parametric model for $p(x|\pa{x})$, Path Shadowing Monte-Carlo relies solely on a model of $p(x)$.
It thus circumvents the exact conditioning of a generative model to a given past history $\pa{x}$. 
Its performance depends directly on the accuracy of this generative model and its ability to produce a variety of paths with correct statistical dependencies. Section \ref{subsec:volatility-prediction} shows that when performed with our maximum entropy Scattering Spectra model of financial prices, PS-MC yields state-of-the-art volatility prediction.

Section \ref{sec:option-pricing} uses Path Shadowing Monte-Carlo for obtaining {\it conditional} option smiles (i.e. option prices at a given date) through Hedged Monte-Carlo with shadowing paths. 
By construction, such smiles depend only on the log-price process distribution $p(x)$ and provide a counterpart to smiles obtained from option market data.
A `trading game' then allows us to show  that our option smiles correctly anticipate non-trivial future price movements, and compares favourably with state of the art models such as the Path-Dependent Volatility model (PDV) of ref.~\cite{guyon2021volatility}. Codes for both our generative model and Path Shadowing Monte-Carlo are available at \url{https://github.com/RudyMorel/shadowing}.

\section{A Multi-Scale Statistical Model for Financial Prices}
\label{sec:a-statistical-model}

Statistical models of financial prices aim at reproducing statistics of the price process only. Price time series exhibit numerous non-Gaussian features, which are difficult to capture within standard low-parametric models, whose number of parameters have been incrementally increased in the literature over the past decades, see e.g.~\cite{Heston1993ACS,bacry2013log,gatheral2018volatility,delemotte2023twoHurstmodel, guyon2021volatility}.
An alternative route is to define a set of characteristic statistics of (log-)prices and impose that they should be accurately reproduced by the model. We denote as $\Phi(x)$ such statistics, for example the empirical mean and variance of log-returns $\Phi(x)=(\Av{t}{\delta x(t)},\Av{t}{|\delta x(t)|^2})$ where, throughout this paper, $\Av{t}{\ldots}$ denotes empirical averages over time $t$. Section \ref{subsec:maximum-entropy} presents maximum entropy models that allow defining models from a given vector of statistics $\Phi(x)$. In the simple case of mean and variance, the maximum entropy model coincides with the Gaussian random walk. 

The set $\Phi(x)$ must be chosen carefully and is governed by a bias-variance tradeoff. It should contain enough relevant statistics of prices for the model to be realistic and accurate -- reducing the model bias. 
However, such statistics must be well estimated on the single available historical realization of $x$ -- reducing the model variance. The construction of a set $\Phi$ that meets these requirements, called the \textit{Scattering Spectra} (SS), was proposed in~\cite{morel2022scale} in the general case of multi-scale processes, which fortunately includes financial data \cite{mandelbrot1997multifractal,bacry2013log, borland2005dynamics}. 

We show in section \ref{subsec:scattering-spectra} that such statistics correspond to low moment multi-scale analogue of the classical skewness and kurtosis of log-returns. We show that even the most recent low-dimensional parametric models fail to accurately account for these statistics. Such discrepancies turn out to be highly relevant when one wants to predict future realized volatility and option smiles, and highlights  the limitations of traditional models, which our approach allows one to overcome.  

In~\cite{morel2022scale}, we have shown that a Scattering Spectra model properly captures the main properties of financial log-returns, in particular of the S\&P500 (the US major stock index). 
In the following, we show that it also quantitatively reproduces the average behavior of option smiles of different maturities, in particular the maturity-dependent skewness that reflects volatility roughness~\cite{fukasawa2017short} and the so-called skew-stickiness ratio~\cite{bergomi2009smile, vargas2015skew}.

\subsection{Maximum Entropy Models}
\label{subsec:maximum-entropy}

We denote as $\sple{x}\in\R^N$ the observed historical realization of log-prices over $N$ days. 
Given a vector of $d$ statistics $\Phi(\sple{x})\in\R^d$ estimated on $\sple{x}$, a maximum entropy model $p_\theta$ with moment constraint $\E_{p_\theta}\{\Phi(x)\}=\Phi(\sple{x})$, if it exists, has an exponential probability distribution~\cite{jaynes_1957}
\begin{equation}
\label{eq:exp-model}
p_\theta (x) = Z^{-1}_\theta e^{- \langle \theta , \Phi(x) \rangle}.
\end{equation}
for certain $\theta\in\R^d$.
Estimating the parameters $\theta$ of model (\ref{eq:exp-model}) is computationally expensive, in particular when the number of statistics $d$ is large~\cite{rolf1998microsampling,betancourt2017conceptual}. 
To avoid this issue, we consider in this paper microcanonical maximum entropy models which approximate model (\ref{eq:exp-model}). These models, together with a sampling algorithm are described in Appendix~\ref{app:sampling}.

Maximum entropy models depend only on the vector of statistics $\Phi(x)$. The model accuracy can be improved by enriching the set $\Phi(x)$. However, we must take into account the problem of estimating $\Phi(x)$ from the single realization of the process $\sple{x}$.
The SS model imposes $\E_{p_\theta}\{\Phi(x)\} = \Phi(\sple{x})$, thus for $p_\theta$ to be a good approximation of the true distribution $p$, one 
needs
$\Phi(\sple{x})$ to be close to 
the true $\E_p\{\Phi(x)\}$. This amounts to having low-variance statistics $\Phi$. A good choice of $\Phi$ is presented in the next section.

\subsection{The Scattering Spectra (SS)}
\label{subsec:scattering-spectra}

A standard way of characterizing the price process is through their trend, volatility, skewness and kurtosis. These are obtained from moments of order $1$, $2$, $3$ and $4$ on log-returns
\begin{equation}
\nonumber
\label{eq:standard-marginal-moments}
\E\{\delta x(t)\} ~,~
\E\{\delta x(t)^2\} ~,~
\E\{\delta x(t)^3\} ~,~
\E\{\delta x(t)^4\}
\end{equation}
However such moments do not characterize the time-structure of log-returns, but rather their one-point distribution.
One could consider the same moments on multi-scale increments
\begin{equation}
\label{eq:multi-scale-increments}
\delta_\ell x(t) = x(t) - x(t-\ell) 
\end{equation}
for different lags $\ell$, but we still obtain a poor description of $x$. For example, these moments do not pick up time-asymmetry, since changing $\delta x(t)$ into $\delta x(-t)$ leaves these moments unchanged. Another disadvantage of multi-scale increments (\ref{eq:multi-scale-increments}) is that they exhibit 
as many scales $1\leq\ell\leq N$ as the 
number of days $N$,
which seems redundant, specially in view of the known scale-invariant properties of $x$. 

The \textit{Scattering Spectra} introduced in~\cite{morel2022scale} capture the main non-Gaussian properties of financial prices: fat tailed log-return distributions, sign-asymmetry, time-asymmetry, volatility clustering and volatility roughness. It consists of $d=\mathcal{O}(\log_2^3 N)$ statistics only, that are low-order moments (order $1$ and $2$ only) and can thus be accurately estimated on the historical realization $\sple{x}$ of size $N$. 

We present here the main steps for building such $\Phi$ and we refer the reader to~\cite{morel2022scale} for more details about the construction.

\noindent\textbf{Step 1. Wavelet increments.}

Log-prices variation have interesting structure at all scales. However, it is not necessary to consider all scales $\ell$ in (\ref{eq:multi-scale-increments}) to characterize them efficiently. 
Standard increments $\delta_\ell x(t)$ (\ref{eq:multi-scale-increments}) are obtained by convolution of $x$ with the filter $g_\ell = \delta_0-\delta_\ell$. 
Wavelet increments replace $g_\ell$ by wavelet filters $\psi_j$ obtained by dilation of a regular mother wavelet $\psi$
\begin{equation}
\label{eq:wavelet-increments}
W_j x(t) = x\star\psi_j(t)~~\mbox{where}~~\psi_j(t) = 2^{-j}\psi(2^{-j}t).
\end{equation}
The mother wavelet $\psi$ has a zero average $\int \psi(t) {\rm d}t = 0$ and its Fourier transform 
$\widehat{\psi}(\om) = \int \psi(t)\, e^{-i \om t}\, {\rm d}t$, which is real, is mostly concentrated at frequencies $\om \in [{\pi}, {2}\pi]$. 
All numerical calculations in this paper are performed with a complex Battle-Lemarié wavelet~\cite{battle1987block,lemarie1988ondelettes}. Fig.~\ref{fig:wavelet} shows the real and imaginary parts of $\psi$ as well as its Fourier transform. We refer the reader to Appendix \ref{app:wavelet} for more properties.

Analogous to (\ref{eq:multi-scale-increments}), wavelet increments (\ref{eq:wavelet-increments}) can be seen as multi-scale increments at scales $\ell=2^j$ with $j=1,\ldots,J$.
However, scales are now defined as bins of frequencies $[2^{-j}\pi, 2^{-j+1}\pi]$ corresponding to the supports of wavelet filters $\psi_j$. The largest scale $2^J$ is chosen to be smaller than the size $N$ of $\sple{x}$. This yields at most $\log_2 N$ scales instead of $N$ lags $\ell$.

Histograms of generalized increments $W_j x$ can be constrained by order 1 and order 2 moments $\E\{|W_j x(t)|\}$, $\E\{|W_j x(t)|^2\}$ which are estimated through  empirical averages. The quantity
\begin{equation}
\label{eq:phi1}
\Phi_1(x)[j] =  \frac{\Av{t}{|W_j x(t)|}^2}{\Av{t}{|W_j x(t)|^2}} 
\end{equation}
is a low-moment measure of kurtosis. Compared to its order $4$ counterpart, it is less sensitive to large values. The more peaked at zero the distribution, the smaller the value of $\Phi_1(x)$ and the higher the kurtosis~\cite{de2012avgsmile}. The order $2$ moment is
\begin{equation}
\label{eq:phi2}
\Phi_2(x)[j] = \Av{t}{|W_j x(t)|^2}
\end{equation}
and quantifies the average volatility at scale $2^j$ on the period.

\noindent\textbf{Step 2. Time-scale dependencies.}

Multi-scale increments $W_jx(t)$ are indexed by time $t$ and scale $2^j$. 
Such map exhibits dependencies across time and scales that are crucial to characterize the distribution of financial prices.
For example, volatility clustering is attested by the fact that $W_jx(t)$ has long-range time correlations. 
Beyond this well-known stylized fact, the authors of~\cite{morel2022scale} have shown that scale dependencies are crucial to fully characterize the non-Gaussian nature of time series. Natural descriptors for such scale dependencies are  order $2$, $3$ and $4$ moments
\begin{equation*}
\E\{ Wx\,Wx^* \} ~,~ 
\E\{ Wx\,|Wx|^2 \} ~,~ 
\E\{ |Wx|^2\,|Wx|^2 \}
\end{equation*}
where the products are taken across times $t,t'$ and scales $j,j'$. In practice, estimating order $3$ and order $4$ moments is very difficult because of the variance induced by large events.
In order to circumvent this problem, we replace $|Wx|^2$ by $|Wx|$ and define the following non-linear correlations of wavelet increments
\begin{equation}
\label{eq:multi-scale-1234}
\E\{ Wx\,Wx^*\} ~,~ 
\E\{ Wx\,|Wx|\} ~,~ 
\E\{ |Wx|\,|Wx|\}
\end{equation}
Owing to the compression properties of wavelets, the first matrix $\E\{ Wx\,Wx^*\}$ is quasi-diagonal and its diagonal coefficients are already estimated by (\ref{eq:phi2}), see \cite{morel2022scale}.

\begin{figure}
\centering
\begin{subfigure}[b]{\linewidth}
    \centering
    \includegraphics[width=\textwidth]{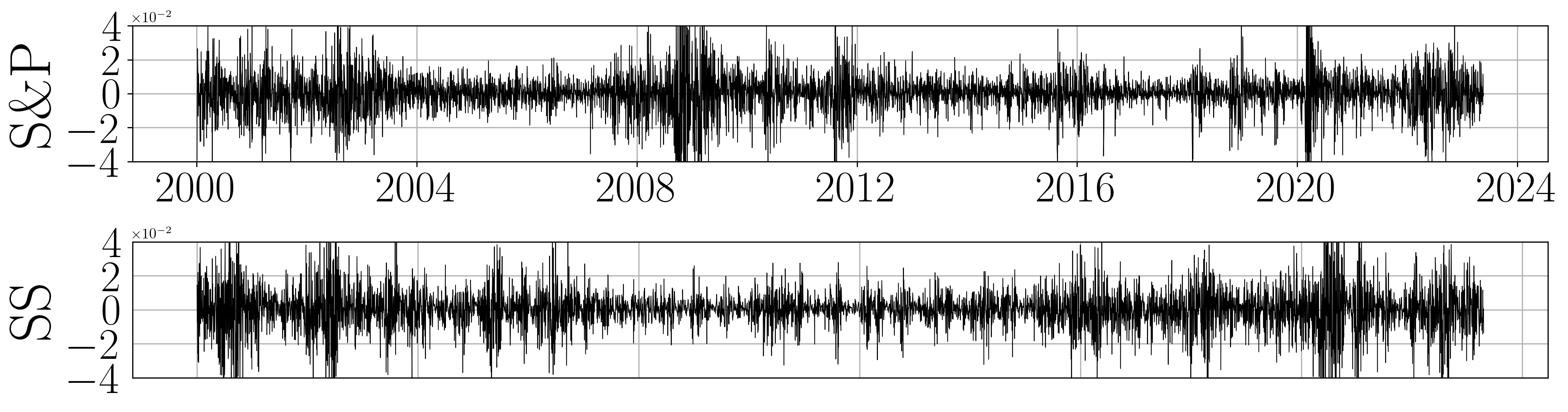}
\end{subfigure}
\vskip\baselineskip
\begin{subfigure}[b]{0.32\linewidth}
    \centering
    \includegraphics[width=\textwidth]{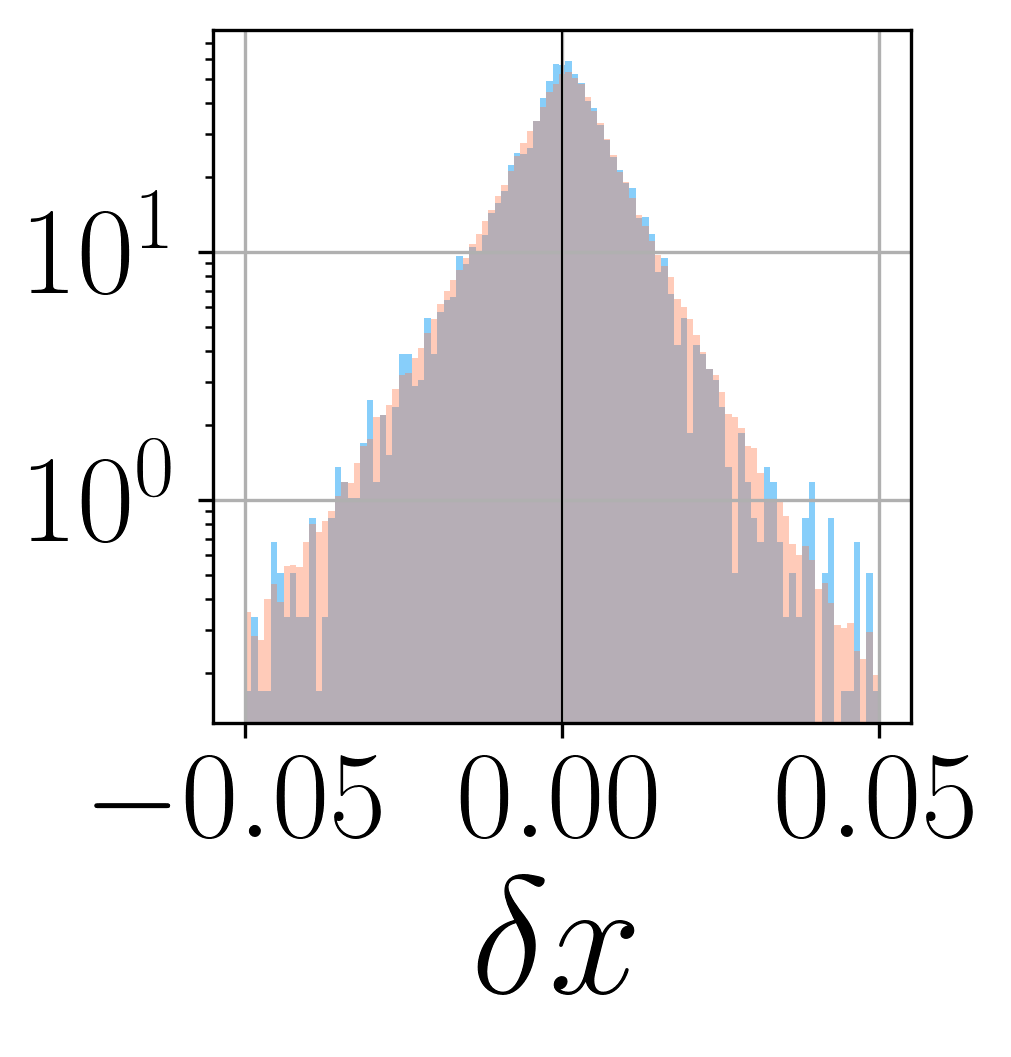}
    \caption{Histogram}
\end{subfigure}
\begin{subfigure}[b]{0.32\linewidth}  
    \centering 
    \includegraphics[width=\textwidth]{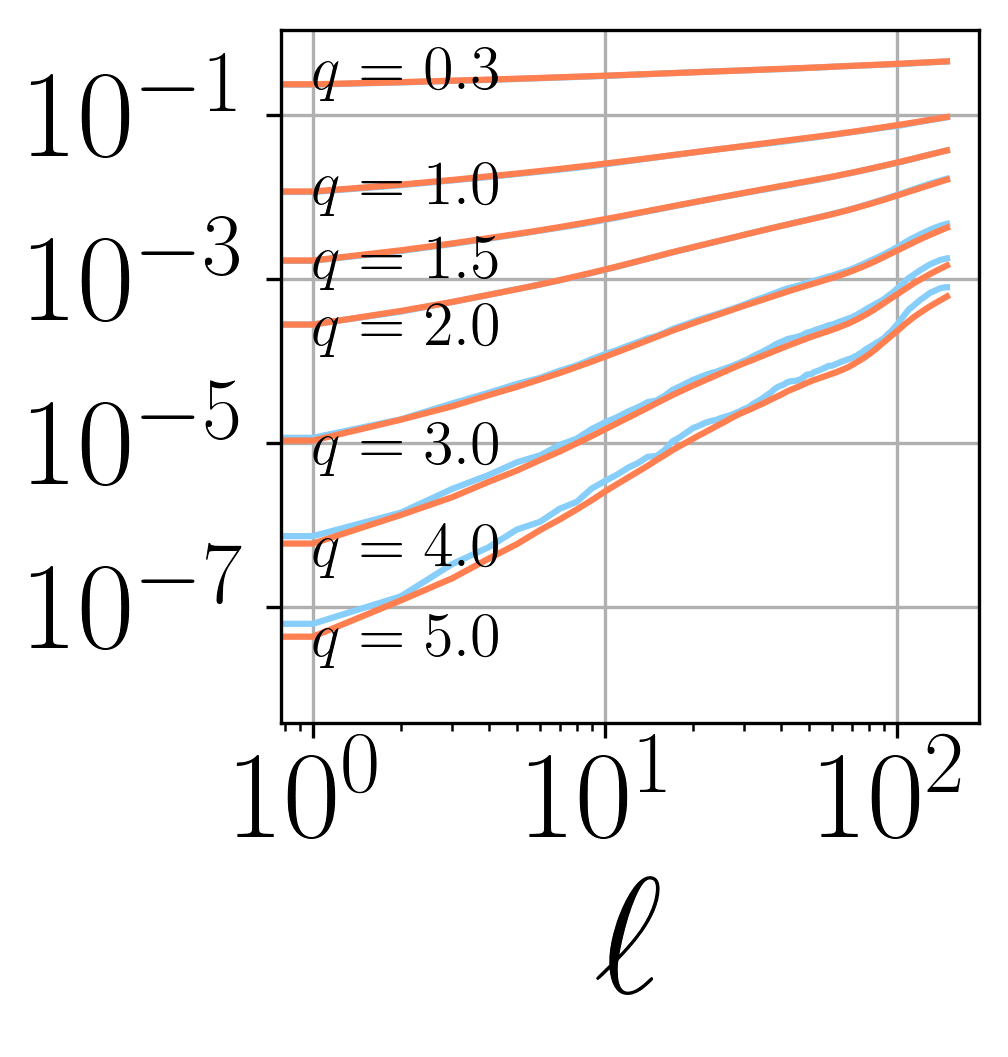}
    \caption{Struct. functions}
\end{subfigure}
\begin{subfigure}[b]{0.32\linewidth}   
    \centering 
    \includegraphics[width=0.96\textwidth]
    {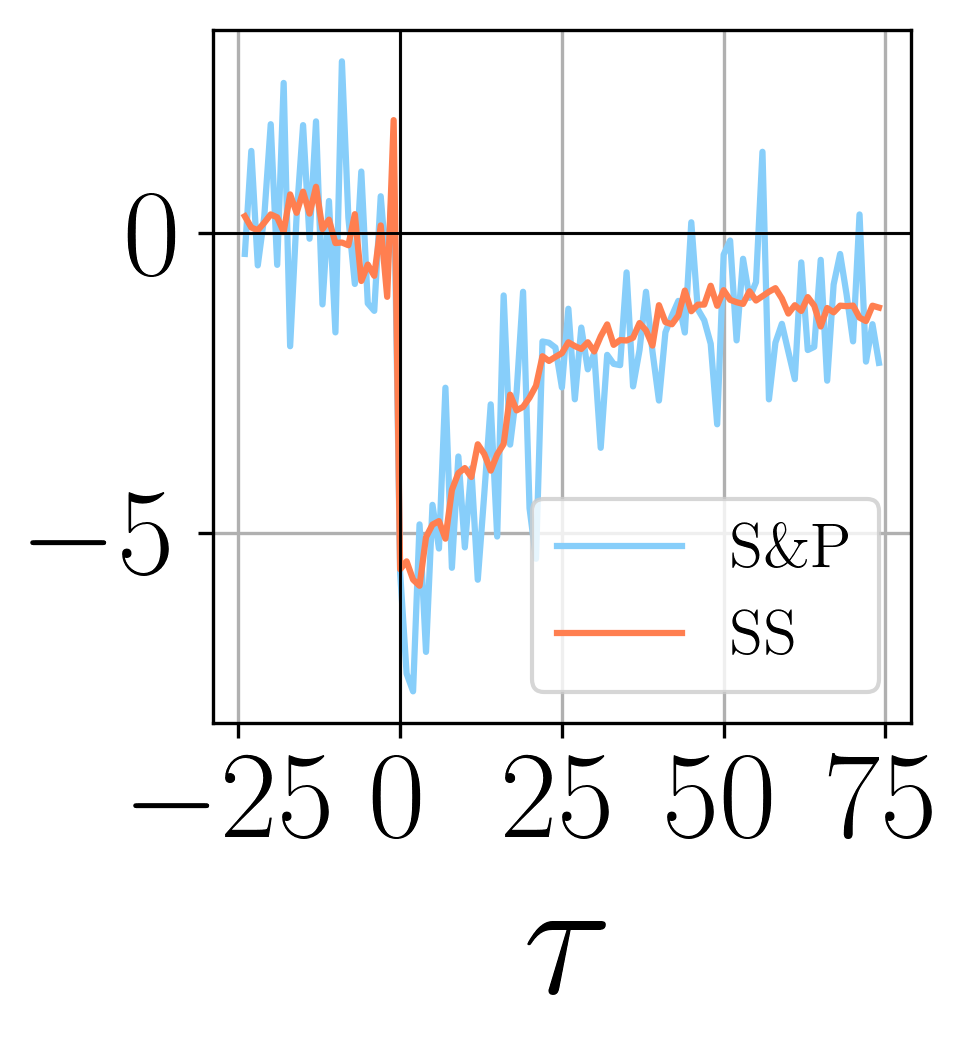}
    \caption{Leverage}
\end{subfigure}
\caption{Standard statistics of log-returns in the Scattering Spectra (SS) model (orange) compared to S\&P observed data (blue). Top graphs: time series of the S\&P and generated by the model. Bottom graphs:
(a) Histogram of daily log-returns $\delta x$. 
(b) Structure functions $\Av{t}{|\delta_\ell x(t)|^q}$. 
(c) Leverage correlation $\Av{t}{\delta x(t-\tau)|\delta x(t)|^2}$ on normalized increments. 
}
\label{fig:stat-our-model}
\end{figure}

\noindent\textbf{Step 3. Low-moment multi-scale skewness and kurtosis.}

Just like for standard skewness and kurtosis that are normalized moments, we normalize the second and third matrices $\E\{ Wx\,|Wx|\}$ and $\E\{ |Wx|\,|Wx|\}$ in (\ref{eq:multi-scale-1234}) by $\E\{|Wx|^2\}$.
One can show that the only non-negligible coefficients in the third matrix are obtained for $t=t'$ and $j\geq j'$, and are estimated through
\begin{equation}
\label{eq:phi3}
\Phi_3(x)[j,j'] = \frac{\Av{t}{W_jx(t)\,|W_{j'}x(t)|}}{\Av{t}{|W_j x(t)|^2}^{\frac{1}{2}}\Av{t}{|W_{j'} x(t)|^2}^{\frac{1}{2}}}.
\end{equation}
These are analogous to the standard low-moment skewness $\E\{Y|Y|\}$ of a normalized random variable $Y$ with $\E\{Y^2\}=1$.
Other than sign-asymmetry, these complex coefficients also measure time-asymmetry through their phase. Indeed, if log-returns are time-reversible $\delta x(-t) \overset{d}{=}\delta x(t)$ then $\mathrm{Im}\,\Phi_3(x)=0$. One typical example is the leverage asymmetric correlation.

The fourth matrix $\E\{|Wx|\,|Wx|\}$ in (\ref{eq:multi-scale-1234}) contains kurtosis information. If $x$ is Gaussian, then for different scales $j \ne j'$ the Gaussian processes $W_j x$ and $W_{j'}x$ are decorrelated, thus independent. It follows that $\E\{|W_jx||W_{j'}x|\} = \E\{|W_jx|\}\E\{|W_{j'}x|\}$ and these coefficients boil down to the low-moment kurtosis (\ref{eq:phi1}). For the log-price process $x$, these coefficients capture long-range non-Gaussian correlation between volatility at different scales $j,j'$ and different times $t,t'$. 

However, the matrix $\E\{|Wx|\,|Wx|\}$ contains too many coefficients to be accurately estimated on a single realization $\sple{x}$. We again rely on compression properties of wavelets to approximate such matrix by cascading a second wavelet operator $W$, which yields a quasi-diagonal matrix $\E\{W|Wx|\,W|Wx|^*\}$ where we define generalized increments of volatility as
\begin{equation}
\nonumber
W_{j_2}|W_{j_1}x|(t) = |x\star\psi_{j_1}|\star\psi_{j_2}(t).
\end{equation}
The non-negligible diagonal coefficients are estimated through an empirical average which yields for $j_1\leq j'_1<j_2$
\begin{equation}
\label{eq:phi4}
\Phi_4(x)[j_1,j'_1,j_2] = \frac{\Av{t}{W_{j_2}|W_{j_1}x|(t)\,W_{j_2}|W_{j'_1}x|(t)}}{\Av{t}{|W_{j_1} x(t)|^2}^{\frac{1}{2}}\Av{t}{|W_{j'_1} x(t)|^2}^{\frac{1}{2}}}.
\end{equation}
These are analogous to the classical low-moment kurtosis. These complex coefficients also capture time-asymmetry through their complex phase. If the log-return process $\delta x$ is time-reversible then $\mathrm{Im}\,\Phi_4(x)=0$. 

We therefore define our Scattering Spectra $\Phi$ as the collection of (i) estimated average volatility (\ref{eq:phi2}), (ii) multi-scale skewness (\ref{eq:phi3}) and (iii) multi-scale kurtosis (\ref{eq:phi1},\ref{eq:phi4})
\begin{equation}
\label{eq:scattering-spectra}
\Phi(x) = \big (\Phi_1(x), \Phi_2(x), \Phi_3(x), \Phi_4(x)\big).
\end{equation}
In total, $\Phi$ consists of $\mathcal{O}(\log^3_2(N))$ order $1$ and order $2$ statistics for a trajectory of size $N$ and can be estimated with low-variance.

Note that $\Phi$ does not rely explicitly on the one-point distribution of increments $\delta_\ell x(t)$. 
Numerical experiments have shown that
slight discrepancies may appear, in particular in order $0$ moments $\mathbb{P}(\delta_\ell x(t)>0)$ which explicitly appear in low-moment smile expansions~\cite{de2012avgsmile}. We thus complement $\Phi_3(x)$ with the moments 
\[
\mathbb{P}(\delta_\ell x(t)>0)
\]
for $\ell=2^j,j=1,\ldots,J$, that are estimated through empirical averages $\Av{t}{\text{sigmoid}(\delta_\ell x(t))}$ where $\text{sigmoid}(x)=(1+e^{-x})^{-1}$.
This adds very few coefficients to our scattering spectra $\Phi(x)$. 

The Scattering Spectra (\ref{eq:scattering-spectra}) thus provide an enriched set of statistics that can be used to quantify model error and interpret any discrepancy. As an example, we revisit through this lens the, low-parametric, Path-Dependent Volatility (PDV) model introduced by Guyon \& Lekeufack~\cite{guyon2021volatility}. 
Although more parsimonious in terms of number of parameters, several stylized facts are in fact not accurately reproduced by such a model, see a more precise discussion in Appendix \ref{app:pdv}, Fig.~\ref{fig:stat-pdv}.

Based on the Scattering Spectra $\Phi$, we have at our disposal a statistical model of financial prices that can be used to generate faithful synthetic time series (see section \ref{subsec:maximum-entropy}). For the S\&P time series $\sple{x}$ of size $N=5827$ days, the Scattering Spectra model (SS model) contains $248 \approx N/20$ real coefficients, which is the dimension of $\Phi(x)$. Log-return trajectories $\delta x$ generated from the SS model are shown in Fig.~\ref{fig:stat-our-model}. Validation of the SS model can be achieved by measuring observables not included in our set $\Phi(x)$ and checking whether or not they are correctly reproduced. Standard statistics such as fat tails, volatility clustering, leverage effect and structure functions, were indeed shown to be captured by the model~\cite{morel2022scale}. These are reproduced in Fig.~\ref{fig:stat-our-model}. While $\Phi(x)$ is composed of order $1$ and order $2$ moments only, the SS model accurately accounts for up to order $5$ moments, which is quite remarkable. Another way to describe the multi-scale statistical properties of price time series is through maturity dependent option smiles, which we discuss in the next section.

\section{The Average Smile as an Alternative Statistical Characterization}
\label{sec:avg-smile}

In this section we validate the SS model by considering historical option pricing as an alternative, intuitive way to characterize the multi-scale, non-Gaussian statistics of price time series. The \textit{average smile} is the unconditional option smile obtained by pricing hedged options using all historical snippets of prices of length equal to the maturity of the option~\cite{potters2001HMC,de2012avgsmile}. Even if real option smiles must be conditioned on a specific past price path~\cite{guyon2021volatility} and are therefore almost never equal to the \textit{average smile}, its shape reveals some interesting, non-trivial properties of prices time series, such as volatility `roughness' (see below). 

Option pricing is performed through the Hedged Monte-Carlo method~\cite{potters2001HMC}, that converts historical probabilities (either real or synthetic) into `risk-neutral' ones. Options are hedged daily, with zero interest rate, on the 6000 price snippets of lengths 150 days available from 2000 to 2023 all rescaled such that the initial price is 100. The average implied volatilities $\sigma(T,K)$ are obtained from option prices $\mathcal{C}(T,K)$. 
Fig.~\ref{fig:avg-smile} compares, for different maturities $T$, the \textit{average smiles} using observed S\&P data and those generated with the SS model. We see that the model indeed reproduces the overall shape of the smile very well. 
Intuitively, the level of the average smile, its asymmetry, its concavity and its term structure are captured by $\Phi_2$ (\ref{eq:phi2}), $\Phi_3$ (\ref{eq:phi3}) and  $\Phi_1$ and $\Phi_4$ (\ref{eq:phi1},\ref{eq:phi4})
\footnote{Appendix \ref{app:smile-sensitivity} shows in more details the parameterization of the model by studying the sensitivity of the smile to the Scattering Spectra statistics $\Phi(x)$.}. 
We have also compared the S\&P average smiles 
with the recent Path-Dependent Volatility model of~\cite{guyon2021volatility},
calibrated such as to reproduce the same SS as best as possible -- see Appendix \ref{app:pdv} for more details. As a general comment, the PDV model underestimates the kurtosis of the process and correspondingly fail to capture accurately the right wing of the average smile, see Fig.~\ref{fig:smile-avg-pdv}. 

We now turn to a more refined analysis of the slope and curvature of these average smiles. 
We denote as $\sigma_{\text{ATM}}(T)=\sigma(T,100)$ the at-the-money volatility and 
\[ 
\mness:=\frac{\ln(\frac{{K}}{100})}{\sigma^{\text{ATM}}\sqrt{T}}
\]
the {\it rescaled} log-moneyness. The slope $\mathcal{S}_T$ and curvature $\kappa_T$ of a smile at maturity $T$ are defined by the order $2$ expansion around the moneyness $\mness=0$
\[
\sigma(\mness,T) := \sigma_{\mathrm{ATM}}(T)\bigg(1 + \mathcal{S}_T  \mness + \kappa_T\mness^2 + o(\mness^2)\bigg)
\]
In the literature, it is customary to define the ATM skew $\mathrm{Skew}_T$ as the slope of the smile as a function of \textit{unscaled} log-moneyness, i.e. $\mathrm{Skew}_T := \mathcal{S}_T/\sqrt{T}$. For most stochastic volatility models, such skew is found to be regular when $T \to 0$, whereas rough volatility models predict a singular behavior $\mathrm{Skew}_T \propto T^{H - 1/2}$ where $H$ is the Hurst exponent of volatility, argued to be small, $H \approx 0.1$~\cite{gatheral2018volatility,fukasawa2021volatility,bayer2016pricing}.

\begin{figure}[h]
\centering
\begin{subfigure}[b]{0.3\linewidth}
    \centering
    \includegraphics[width=\textwidth]{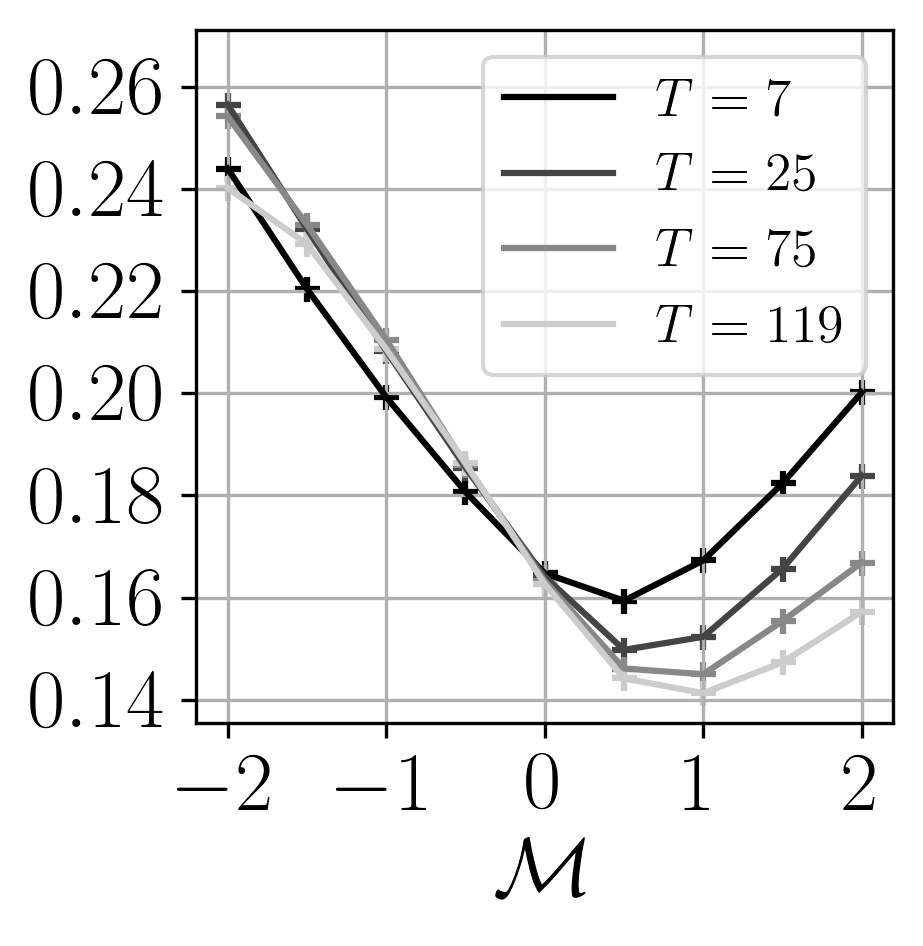}
    \caption{Smile (S\&P)}    
\end{subfigure}
\begin{subfigure}[b]{0.3\linewidth}  
    \centering 
    \includegraphics[width=\textwidth]{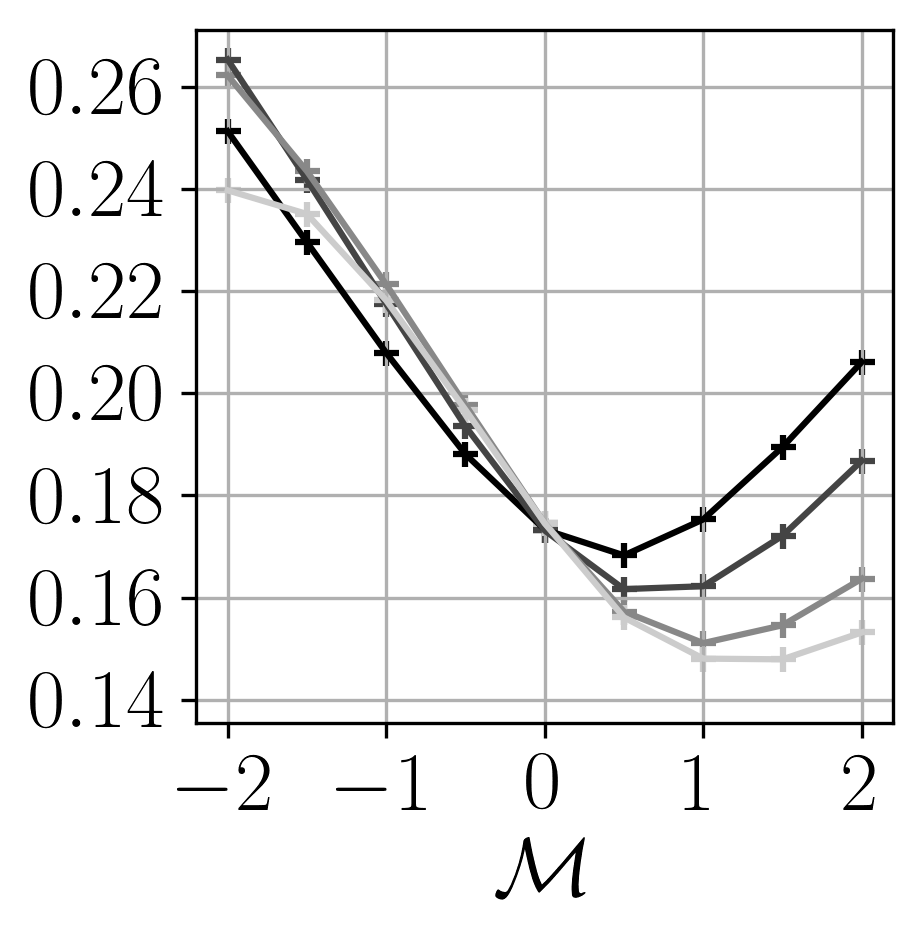}
    \caption{Smile (SS)}
    \label{fig:smile-avg-ss}
\end{subfigure}
\begin{subfigure}[b]{0.3\linewidth}  
    \centering 
    \includegraphics[width=\textwidth]{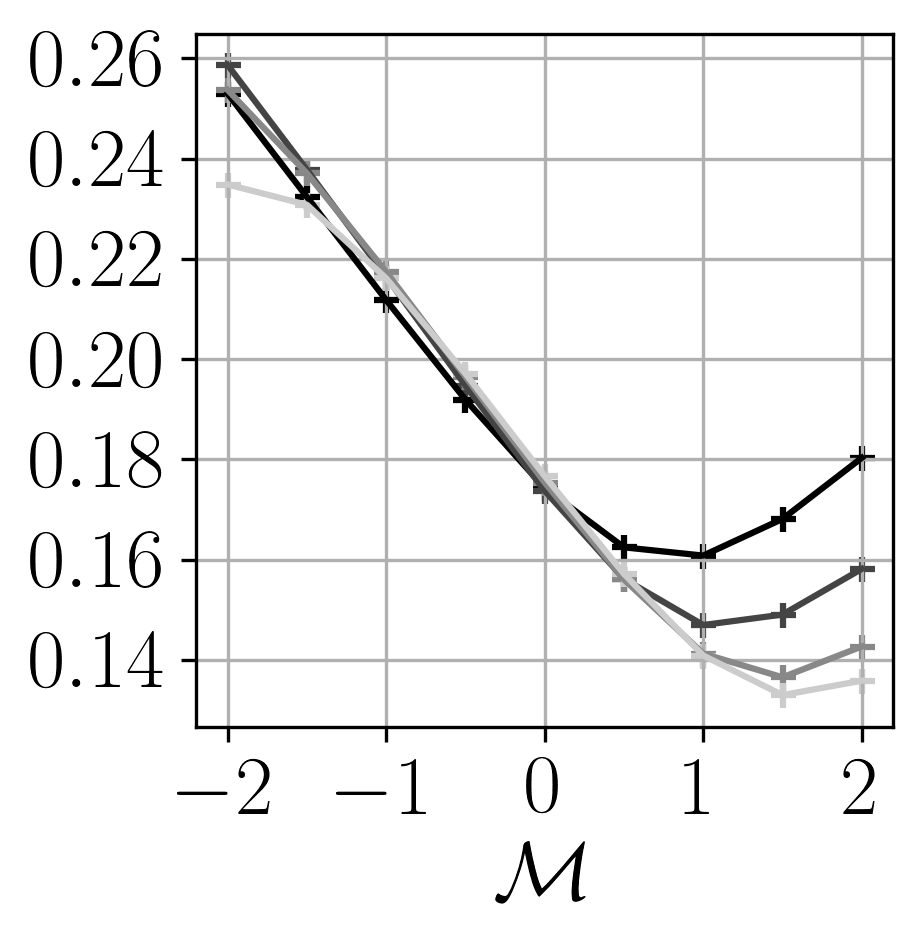}
    \caption{Smile (PDV)}
    \label{fig:smile-avg-pdv}
\end{subfigure}
\vskip\baselineskip
\begin{subfigure}[b]{0.31\linewidth}   
    \centering 
    \includegraphics[width=\textwidth]{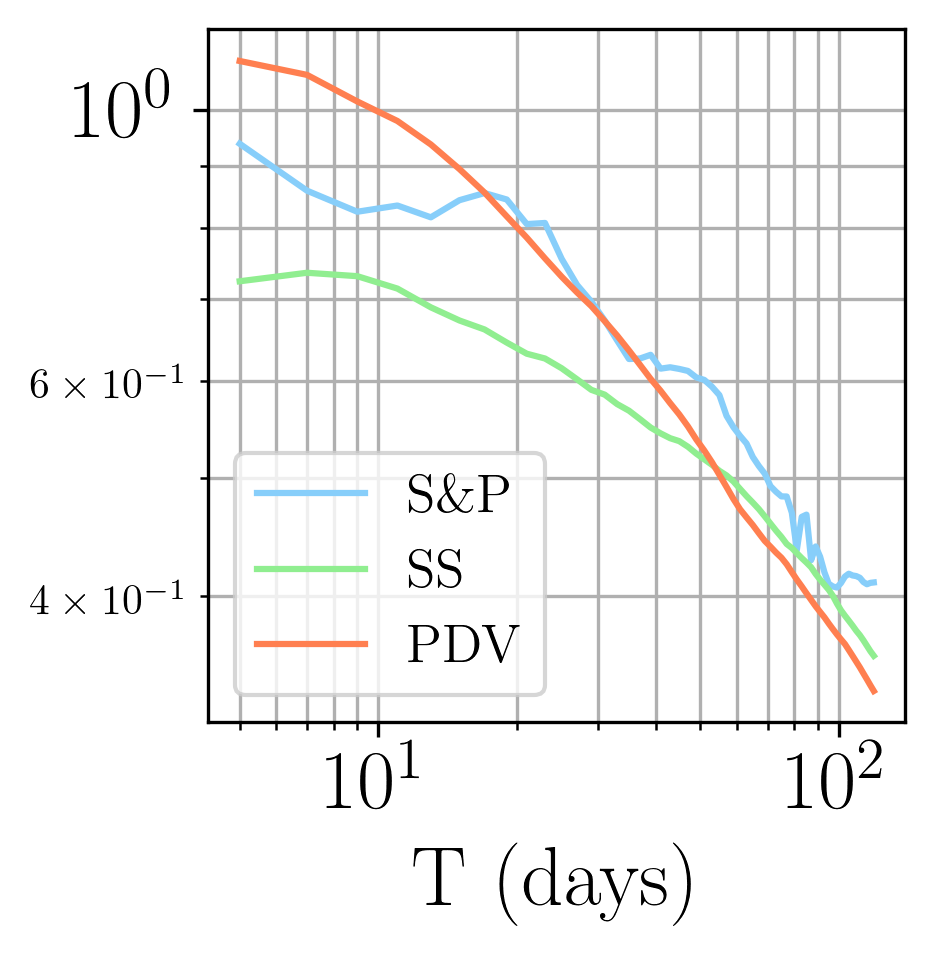}
    \caption{ATM skew (abs)}
    \label{fig:atm-skew}
\end{subfigure}
\begin{subfigure}[b]{0.31\linewidth}   
    \centering 
    \includegraphics[width=\textwidth]{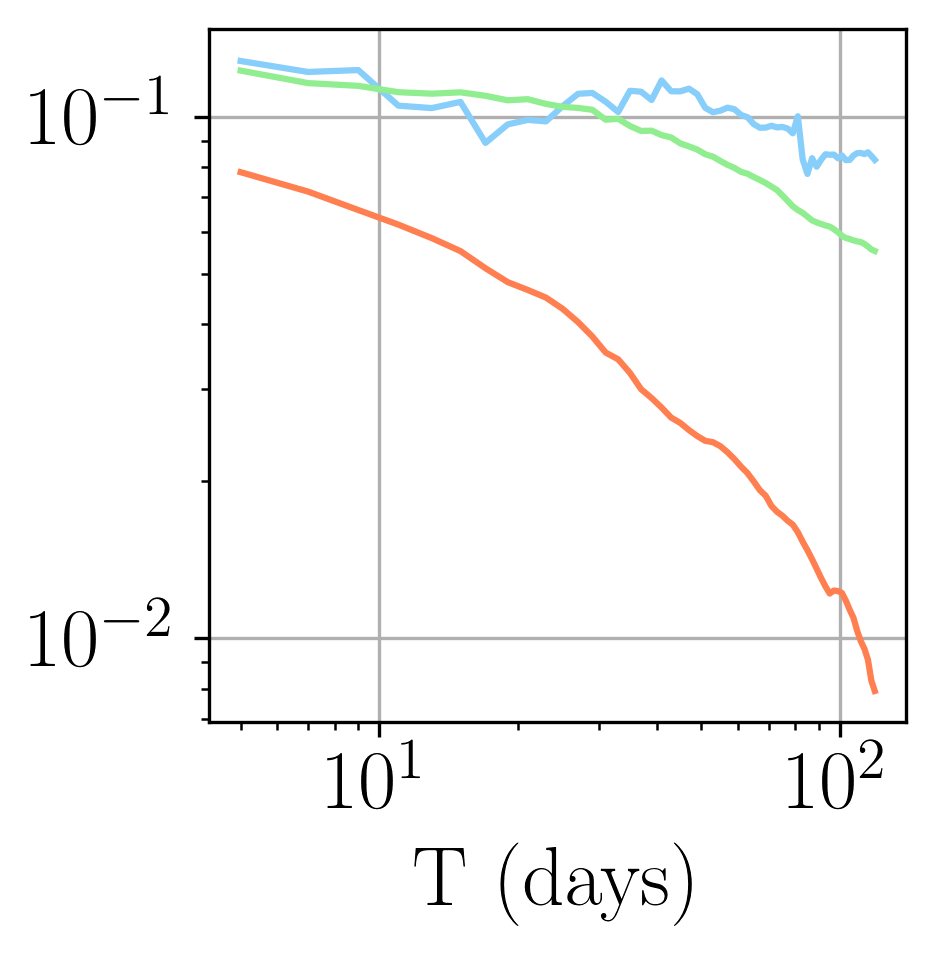}
    \caption{ATM kurtosis}
    \label{fig:atm-kurtosis}  
\end{subfigure}
\begin{subfigure}[b]{0.3\linewidth}   
    \centering 
    \includegraphics[width=\textwidth]{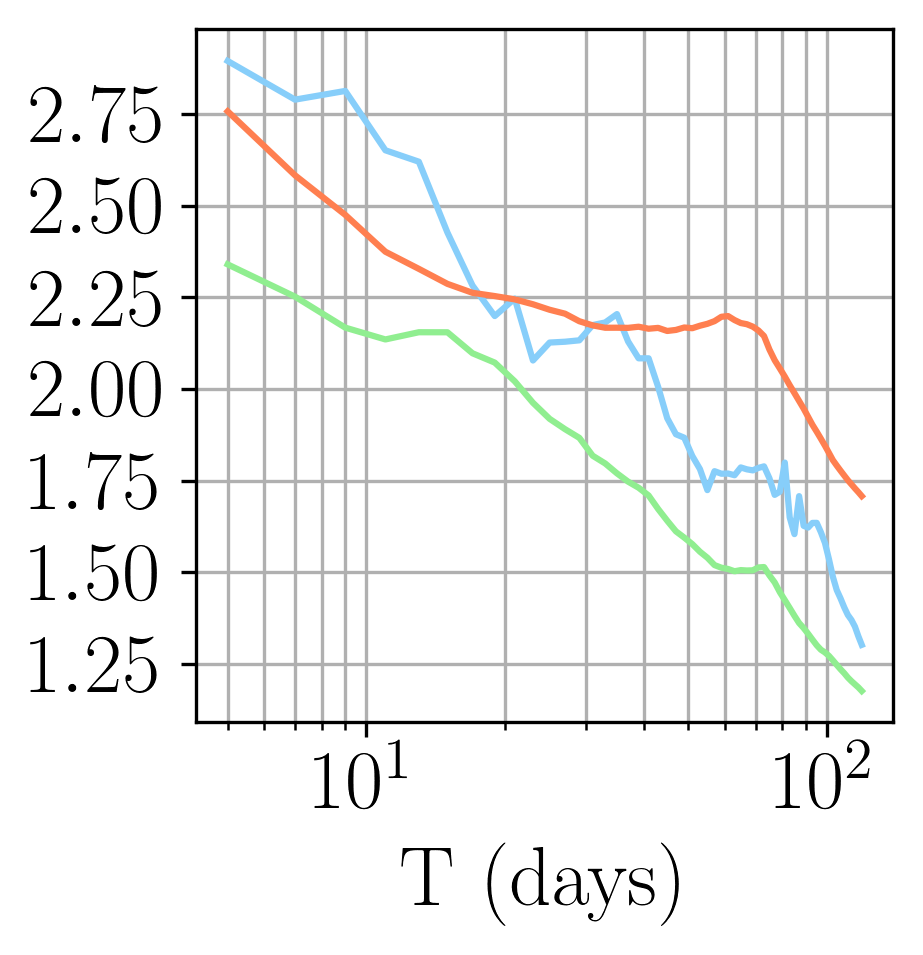}
    \caption{Skew Stickiness}
    \label{fig:sksratio}    
\end{subfigure}
\caption{Average option smiles estimated using S\&P price data (a), in our Scattering Spectra (SS) model (b) and in the Path-Dependent Volatility (PDV) model  
(c). The SS model qualitatively captures the two regimes of the ATM skew as a function of maturity (d), with a cross-over around 20 days~\cite{guyon2022does,delemotte2023twoHurstmodel}, as well as the very slow decaying power-law of the ATM kurtosis \cite{potters1998financial} (e) and the behavior of the skew-stickiness ratio (f). The PDV model, on the other hand, correctly captures skewness effects (d) but fails to capture the amplitude and term structure of the ATM kurtosis (e). 
}
\label{fig:avg-smile}
\end{figure}

Fig.~\ref{fig:atm-skew} shows the absolute value of the ATM skew of the average smile for different maturities. In~\cite{guyon2022does,delemotte2023twoHurstmodel} it was shown using option market data that the ATM skew exhibits two power-law regimes pertaining to short and long maturities, with a cutoff between the two regimes around 20 days. Interestingly, we also observe this behavior on the average smile of the S\&P, which only depends on the price process, with no reference to actual option markets.  
Both the PDV model and the SS model track rather well such a behavior. The scaling of log-volatility increments characteristic of rough volatility models~\cite{gatheral2018volatility} or multifractal models \cite{bacry2013log} is in fact encoded in the model through the coefficients $\E\{|W_{j_2}|W_{j_1}x|(t)|^2\}$ included in $\Phi_4(x)$ (\ref{eq:phi4}). These consider instantaneous volatility $|W_{j_1}x|$ at scale $2^{j_1}$ and its increments at scales $2^{j_2}$. 

The SS model furthermore captures two more subtle stylized facts of option smiles: 
\begin{itemize}
    \item The ATM curvature $\kappa_T$, related to a low moment kurtosis~\cite{de2012avgsmile}, is well captured and behaves as a slow power-law of $T$ (first reported in \cite{potters1998financial}), both for the S\&P and within the SS model. The PDV model, on the other hand, strongly underestimates $\kappa_T$ (see Fig.~\ref{fig:atm-kurtosis}). 
    \item The instantaneous change of ATM volatility $\sigma_\mathrm{ATM}(T)$ induced by a change in underlying price can be linearly regressed on $\delta x$. This defines the skew-stickiness ratio $R_T$ through the following regression~\cite{bergomi2009smile,vargas2015skew}
    \[
    \delta \sigma_\mathrm{ATM}(T) = -R_T\mathrm{Skew}_T \times {\delta x} + \epsilon
    \]
    As shown in \cite{vargas2015skew}, $R_T$ has a non-trivial dependence on maturity. Fig.~\ref{fig:sksratio} shows that both the PDV model and the SS model again reproduce quite well such a dependence. 
\end{itemize}

\section{Path Shadowing Monte-Carlo \& Volatility Predictions}
\label{sec:path-shadowing}

The {\it average smile} exercise of the previous section is interesting insofar as it allows one to test the ability of various models to capture the distribution of price changes over different maturities. However, it fails to inform us on the power of the model to actually predict, at a given date, the distribution of price changes forward in time. Of course, this is what Finance is all about and we now introduce a framework to do precisely that.    

We first assume that the real world is at least approximately stationary, in the sense that it can be approximated by a statistical model with fixed, time-independent parameters. Of course, this can only be true if the model is rich enough to generate intermittent time series that superficially appear non-stationary -- such as the ones shown in Fig.~\ref{fig:stat-our-model}, with alternating periods of high and low volatility that are actually described by the {\it same} model.   

If this is the case, then given the past history $\pa{\sple{x}}$ at current time $t$, a model-free method for predicting the unknown future $\fu{\sple{x}}$ is to look for occurrences similar to $\pa{\sple{x}}$ in the historical realization of log-prices. If such occurrences can be found, what happened thereafter provides some information about the unknown future $\fu{\sple{x}}$ at the current time $t$.

Finding exact occurrences of course happens with vanishing probability. We therefore introduce an embedding $h(\pa{\sple{x}})$ that reduces the dimensionality of past trajectories and define \textit{shadowing paths} as paths $x$ whose past history $h(\pa{x})$ is close to $h(\pa{\sple{x}})$ in a certain sense. 
Furthermore, instead of scanning the historical past, we propose in this section to scan a long dataset of paths generated using the model presented in section \ref{sec:a-statistical-model}.

These shadowing paths are then used as inputs of our proposed \textit{Path Shadowing Monte-Carlo} (PS-MC) method, which allows us to obtain state-of-the-art predictions for future realized volatility. The method will be extended in the next section \ref{sec:option-pricing} to option pricing.

\subsection{The Path Shadowing Monte-Carlo Method}

First, at a given time $t$, we separate a log-price path $x\in\R^N$ into its past $\pa{x} = (x(u), u\leq t)$ and future $\fu{x} = (x(u), u>t)$
\[
x = (\pa{x}\,,\,\fu{x})
\]
Let $q(\fu{x})$ be a quantity we want to predict, for example the average realized variance in the next $T$ days $q(\fu{x}) \coloneqq \sum_{u=t+1}^{t+T} |\delta x_u|^2/T$. In the following section, we write $q(x) = q(\fu{x})$ for simplicity. 
An optimal prediction of $q(x)$ for the mean square error
as a function of the observed past $\pa{\widetilde x}$ is given by the conditional expectation
\begin{equation}
\label{eq:conditional-exp}
\E\{q(x) ~|~ \pa{x} = \pa{\widetilde x}\}
\end{equation}
with $\E$ the expectation under the true distribution $p(x)$ of log-prices. 
The goal is to estimate such conditional expectation.
 
Let us for a moment omit the conditioning on the past. The standard Monte-Carlo method estimates expectations 
using a finite number of realizations $x^1,\ldots,x^n$ drawn from $p(x)$ as
\begin{equation}
\nonumber
\frac{1}{n}\sum_{k=1}^n q(x^k)
\end{equation}
which converges to $\E\{q(x)\}$ as $n\to+\infty$ under independence of $x_1,\ldots,x_n$ and integrability of $q(x)$ .

In theory, such method could apply to estimate (\ref{eq:conditional-exp}), however it would require finding paths $x^k$ such that $\pa{x^k}=\pa{\sple{x}}$, which is all but impossible for data of finite size.

To tackle this problem, we relax strict conditioning on $\pa{\sple{x}}$ and consider paths $x$ whose past $\pa{x}$ is \textit{close} to $\pa{\sple{x}}$ in a certain sense. In order to account for possible long-range dependencies, we would like to consider a long past $\pa{\sple{x}}$. 
However, finding paths $\pa{x}$ at a given distance from $\pa{\sple{x}}$ becomes exponentially difficult as the size of the path grows -- this is the curse of dimensionality. In order to control the dimension, we consider a path embedding $h(\pa{x})\in\R^M$. Given a threshold $\eta>0$ we define the set of \textit{$\eta$-shadowing paths} as
\begin{equation}
\label{eq:shadowing-paths}
\sha = \{x\in\R^N \ | \ \|h(\pa{x}) - h(\pa{\sple{x}}) \| < \eta \}
\end{equation}
For example, $h(x) = \delta x$ considers past log-returns, and hence log-price paths up to an additive constant. Our choice of $h$ is detailed in the next section. 
The term \textit{shadowing} is freely inspired by the shadowing principle in chaotic dynamical systems~\cite{anosov1969geodesic,sinai1972gibbs,bowen1975omega,hammel1987numerical}. 
The idea is that for a certain small $\eta$, paths in $\sha$ can be considered as true realizations of the process that can be used to faithfully compute observables.

\textit{Path Shadowing Monte-Carlo}, illustrated in Fig.~\ref{fig:shadowing}, is a Monte-Carlo method on shadowing paths. It is a predictive method since shadowing paths span over the future.
Unlike a standard Monte-Carlo method, not all paths should have the same weight since $\|h(\pa{x^k}) - h(\pa{\sple{x}})\|$ is not uniform in $k$. This is to say, certain paths \textit{shadow} more accurately $\pa{x}$ than others and should be considered as more \textit{likely} to be extensions of the observed $\pa{\sple{x}}$. Path Shadowing Monte-Carlo estimates (\ref{eq:conditional-exp}) by averaging $q(x)$ on paths $x^1,\ldots, x^n$ with weights $w_1,\ldots,w_n$. This yields the following estimator
\begin{equation}
\label{eq:shadowing-MC}
\frac{1}{n}\sum_{k=1}^n w_k
q(x^k),
\end{equation}
called the Nadaraya–Watson estimator~\cite{nadaraya1964estimating,watson1964smooth}.
In the following, we choose Gaussian weights, to wit
\begin{equation}
\nonumber
w_k = c\,\exp\left[-\frac{ \|h(\pa{x^k}) - h(\pa{\sple{x}})\|^2}{2\eta^2}\right]
\end{equation}
with $c$ such that  $\frac1{n}\sum_{k=1}^n w_k = 1$. 
The set of shadowing paths $\sha$ (\ref{eq:shadowing-paths}) can be defined as the set of all paths that are one standard deviation away from $\pa{\sple{x}}$ for the Gaussian kernel. 

The following theorem proves the convergence of the estimator (\ref{eq:shadowing-MC}) under standard hypotheses.
\begin{theorem}[Path Shadowing Monte-Carlo Method]
\label{theo:shadowing-MC}
If $q$ is continuous with $\E\{|q(x)|\}<+\infty$ and the distribution $p$ of $x$ is continuous with $p(x)>0$ for all $x\in\R^N$, then given $x^1,\ldots,x^n,\ldots$ independent realizations of $x$, the Path Shadowing Monte-Carlo estimator with $h(x)=x$ converges almost surely
\[
\frac{1}{n}\sum_{k=1}^n w_k
q(x^k) \longrightarrow \E\{q(x) ~|~ \pa{x} = \pa{\sple{x}}\}
\]
for a certain limit $n\to+\infty$ and $\eta \to 0$. 
\end{theorem}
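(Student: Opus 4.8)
The plan is to recognize the estimator (\ref{eq:shadowing-MC}) as a Nadaraya--Watson kernel regression estimator and to establish its consistency by writing it as a ratio of two empirical averages, letting $n\to\infty$ first (through the strong law of large numbers) and then $\eta\to0$ (through the approximate-identity property of the Gaussian kernel). First I would split each sample path into its past and future coordinates. Writing $y=\pa{x}\in\R^{N_1}$ for the past block, $z=\fu{x}$ for the future block, and $y_0=\pa{\sple{x}}$ for the observed past, and noting that with $h(x)=x$ the weight of sample $x^k$ is proportional to the Gaussian kernel $K_\eta(\pa{x^k}-y_0)$ where $K_\eta(u)=(2\pi\eta^2)^{-N_1/2}\exp(-\|u\|^2/(2\eta^2))$, the estimator becomes
\[
\widehat m_n \;=\; \frac{\tfrac1n\sum_{k=1}^n K_\eta(\pa{x^k}-y_0)\,q(x^k)}{\tfrac1n\sum_{k=1}^n K_\eta(\pa{x^k}-y_0)} ,
\]
since the normalizing constant $c$ of the weights and the prefactor $(2\pi\eta^2)^{-N_1/2}$ both cancel in the ratio.

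Next, for fixed $\eta$, I would apply the strong law of large numbers to numerator and denominator separately. As the $x^k$ are i.i.d. realizations of $x$, the denominator converges almost surely to $f_\eta(y_0):=\E\{K_\eta(\pa{x}-y_0)\}=(K_\eta * p_Y)(y_0)$, where $p_Y$ is the marginal density of the past, and the numerator converges almost surely to $g_\eta(y_0):=\E\{K_\eta(\pa{x}-y_0)\,q(x)\}=(K_\eta * \phi)(y_0)$, where $\phi(y):=\int q(y,z)\,p(y,z)\,\mathrm dz$. The hypothesis $\E\{|q(x)|\}<\infty$ together with the boundedness of $K_\eta$ makes the numerator expectation finite, so the law of large numbers applies; and $f_\eta(y_0)>0$ because $K_\eta>0$ and $p>0$. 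Hence the ratio converges almost surely to the deterministic quantity $g_\eta(y_0)/f_\eta(y_0)$.

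Finally I would let $\eta\to0$. Because $K_\eta$ is a Gaussian approximate identity, and because $\phi\in L^1$ (indeed $\int|\phi|\le\E\{|q(x)|\}<\infty$) and $p_Y\in L^1$, the mollifications $K_\eta*\phi$ and $K_\eta*p_Y$ converge at any continuity point $y_0$ to $\phi(y_0)$ and $p_Y(y_0)$ respectively. Therefore the deterministic ratio tends to
\[
\frac{\phi(y_0)}{p_Y(y_0)}\;=\;\E\{q(x)\mid \pa{x}=y_0\},
\]
which is exactly the asserted limit under the iterated regime $n\to\infty$ then $\eta\to0$.

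The hard part will be the passage $\eta\to0$. Two technical points need care. First, one must check that $\phi$ (and $p_Y$) are continuous at $y_0$ so the approximate-identity argument holds pointwise; I would derive this from the assumed continuity of $p$ via a dominated-convergence argument, using $\E\{|q(x)|\}<\infty$ to furnish an integrable envelope. Second, the clean statement is an iterated limit; promoting it to a genuine joint limit with a single coupled bandwidth $\eta=\eta_n\to0$ requires controlling the fluctuations of the empirical averages \emph{uniformly} as $\eta$ shrinks --- typically through an exponential concentration inequality combined with Borel--Cantelli under a condition such as $n\,\eta_n^{N_1}\to\infty$. This is precisely where the dimensionality $N_1=\dim\pa{x}$ enters and where the delicate estimates lie, reflecting the curse of dimensionality motivating the embedding $h$.
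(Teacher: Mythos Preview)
Your proposal is correct and follows essentially the same route as the paper: write the estimator as a ratio of two empirical averages, apply the strong law of large numbers for $n\to\infty$, and then use the approximate-identity property of the Gaussian kernel for $\eta\to0$ to recover the conditional expectation. The only cosmetic difference is that the paper states its mollifier lemma directly for $f=qp$ and $f=p$ on $\R^N$ (both continuous by hypothesis), with the marginalization over $\fu{x}$ built into the lemma, whereas you first marginalize to $\phi$ and $p_Y$ and then mollify --- which is why you need the extra continuity check on $\phi$ that the paper avoids; your additional remarks on the joint versus iterated limit go beyond what the paper proves.
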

The proof is in Appendix \ref{app:proofs}. 
The continuity assumptions as well as the assumption that $p>0$ are technical and can be softened; note that a $p_\theta$ of the form (\ref{eq:exp-model}) satisfies them. We refer the reader to~\cite{hansen2008uniform} for convergence theorems under more generic hypotheses.

Path Shadowing Monte-Carlo can be seen as a kernel method on log-price paths using a Gaussian kernel. It is a fully non-local method in the sense that the collected paths $x^k$ may be far away in the past from $\pa{\sple{x}}$, possibly in a generated dataset of paths, contrary to non-local means~\cite{buades2011non} that only consider neighborhoods of a patch. Such non-locality ensures in practice the independence condition in theorem \ref{theo:shadowing-MC}.

\subsection{Generating Shadowing Paths}
\label{subsec:choice-of-parameters}

Collecting enough shadowing paths from the historical realization of S\&P is unrealistic. 
Indeed, financial price paths are noisy, two different observed paths are likely to be distant from one another, signifying a high-entropy underlying process. Thus, on limited data,
the set $\sha$ will contain almost no paths for reasonable values of $\eta$, required to be small for the method to converge.

We thus propose to scan for paths $x^1,\ldots x^n$ in a long generated dataset of log-prices, allowing us to take $n\gg1$ and selective shadowing threshold $\eta\ll1$. This however immediately introduces a modelling error, due to the fact that we are estimating (\ref{eq:conditional-exp}) where $\E$ is now the expectation with regard to the model distribution and not the true distribution $p(x)$.

A good model should generate trajectories that capture the long-range dependencies in order for the shadowing paths to have predictive power on the future of $\pa{\sple{x}}$. It should also generate a variety of realistic scenarios in order to find enough paths in $\sha$ for small $\eta$. 
As discussed in section \ref{sec:a-statistical-model} the SS model precisely meets these requirements: 
it promotes diversity of generated paths through entropy maximization while reproducing main statistical properties of financial prices.
Furthermore, should our generative algorithm produce occasionally unrealistic paths, 
they
would be given a very small weight $w$ and would not contribute to the estimation of $q(x)$. Shadowing Monte-Carlo is thus robust to outliers in the generated set of paths. 
Note also that this dataset can be generated once and can be scanned again and again for several prediction dates.

A crucial point for PS-MC to give good results is to understand how the path embedding $h$ affects the notion of path proximity. Such embedding should be chosen adequately. It should pick relevant features of $\pa{x}$ to predict the quantity of interest $q(x)$, while remaining low-dimensional such that enough paths can be found in $\sha$ for small $\eta$. The naive embedding $h(x) = \big(\delta x (t), -M+1 \leq t \leq  0\big)\in\R^M$ limits the past horizon $M$, since $M$ is then also the dimensionality of $h$. 

We propose an embedding $h$ that again leverages the scale-invariance of $x$ in the same way our Scattering Spectra framework does, and incorporates the influence of distant past while remaining low-dimensional. 
It consists of multi-scale increments in the past with a power-law decaying weight
\begin{equation}
\label{eq:choice-of-h}
h_{\alpha,\beta}(x) = \Big( \frac{ x(t) - x(t-\ell)}{\ell^\beta} 
~ , ~ \ell=\lfloor \alpha^m \rfloor  ~ , ~ m=1,2,\ldots \Big)
\end{equation}
for a certain $\alpha>1$ so that the past is progressively coarse-grained, and $\beta\geq0$ so that the far away past is progressively discounted. 
For a given $\pa{\sple{x}}$ we choose
$\eta$ to be equal to $\widehat \eta \|h(\pa{\sple{x}})\|$, which amounts to normalize the distance $\|h(\pa{x}) - h(\pa{\sple{x}})\|$ by $\|h(\pa{\sple{x}})\|$. 
Such $h$ is a discretization of a continuous embedding that satisfies scaling and dilation equivariance, see Appendix \ref{app:choice-of-h}. In practice we truncate the progression $\lfloor \alpha^1 \rfloor, \lfloor \alpha^2 \rfloor,\ldots$ in order for the span of $h$ to be bounded by $126$ trading days in the past (corresponding to half a year). 

The main parameters are thus $\alpha,\beta$ and $\widehat \eta$. 
Parameter $\alpha$ determines the dimensionality of the path embedding $h_{\alpha,\beta}$, small $\alpha$ yields high-dimensional embedding. Parameter $\beta$ (not to be confused with the parameters $\beta_{0,1,2}$ of the PDV model) rules the relative importance of distant past to recent past in the selection of shadowing paths. Large $\beta>0$ means that the recent past bears more weight.
Finally, there is a bias-variance trade-off in the choice of $\widehat \eta$. When $\widehat \eta\ll1$ only the closest path will be used for averaging, leading to large variance estimator (\ref{eq:shadowing-MC}). When $\widehat \eta\gg1$ then all paths are averaged uniformly, including paths whose past $\pa{x}$ has nothing to do with $\pa{\sple{x}}$, thus deteriorating the bias of our PS-MC estimator.

\subsection{Volatility Prediction}
\label{subsec:volatility-prediction}

As a meaningful application of Path Shadowing Monte-Carlo, we consider in this section the prediction of the future daily realized variance over $T$ days, for several values of $T$:
\begin{equation}
\nonumber
q_T(x) = \frac{252}{T}\sum_{u=t+1}^{t+T} |\delta x_u|^2.
\end{equation}
We consider all 2112 dates $t$ from January 2015 to March 2023. For each of them we consider the realized variance over $T=1,7,25,75,150$ days.

Our PS-MC method (\ref{eq:shadowing-MC}) uses paths generated from the model presented in section \ref{sec:a-statistical-model}. We compute the Scattering Spectra statistics (\ref{eq:scattering-spectra}) on observed 3772 S\&P log-prices from January 2000 to December 2014, such that all our predictions are {\it out-of-sample}. From such statistics we generate 32\,768 trajectories of same size 3772 (see Fig.~\ref{fig:generated-samples} for examples), that represents $n\approx$ 115 million paths $x^1,\ldots,x^n$ of size 126+150 days. 
For a given $\pa{\sple{x}}$ we scan such dataset and select the 50\,000 closest paths in the sense of the distance induced by embedding (\ref{eq:choice-of-h}), parameterized by $\alpha,\beta$. 
While this scanning step is computationally demanding, it can be fully parallelized.
We then perform a weighted average on those closest paths, parameterized by $\widehat \eta$.  

Parameters $\alpha,\beta,\widehat \eta$ are calibrated using our generative model itself, in order to avoid any over-fitting on the limited train data from S\&P. 
We choose those parameters such that $q_T(x)$ is optimally predicted within the model. More precisely,
we take 1100 snippets $\pa{\sple{x}}$ from the generated dataset, for which we have access to $\fu{\sple{x}}$. We obtain an estimate of $q_T$ for these 1100 dates and $T=7,25,75,150$. We then choose the best $\alpha,\beta,\widehat \eta$ so as to maximize the $R^2$ score between prediction and true values of the SS model itself. This yields the following optimal parameters: $\alpha=1.15,\beta=0.9,\widehat \eta=0.075$ and a path embedding $h_{\alpha,\beta}$ of dimension 34. 

Let us note that these optimal values barely change when predicting realized variance at different maturities. This is because of the scale-invariance of both the model and of the path embedding. Note also that $\alpha=1.15$ in (\ref{eq:choice-of-h}) means that the values $\ell=1,2,3,4$ appear multiple times. This amounts to ascribing an even larger weight to small time lags. 

\begin{table}[!h]
\centering
\begin{tabular}{|c|c|c|c|c|c|}
\hline
Number of days $T$ & $1$ & $7$ & $25$ & $75$ & $150$
\\
\hline
\hline
Benchmark & -0.16 & 0.43 & -0.05 & -0.58 & -0.79
\\
\hline 
PDV (SS regression) & 0.35 & 0.53 & 0.17 & -0.44 & -0.95
\\
\hline 
PDV (linear regression) & \textbf{0.36} & 0.55 & 0.29 & 0.0 & -0.07
\\
\hline
SS Path Shadowing & 0.32 & \textbf{0.56} & \textbf{0.33} & \textbf{0.07} & \textbf{0.01}
\\
\hline
\end{tabular}
\caption{Prediction of realized daily volatility ($R^2$ score).
The PS-MC method based on the Scattering Spectra (SS) model outperforms the recently introduced PDV model
at all time-scales $\geq$ 7 days, and upholds predictive power up to a period of $\approx$ 150 days. For $T=1$ day, however, the PDV model performs best. The benchmark is the realized variance on the $T$ previous days. The PDV model was calibrated using two different methods, see Appendix \ref{app:pdv}.
}
\label{tab:vol-prediction}
\end{table}

The prediction quality is measured through the $R^2$ score on future volatility estimates and are shown in Table \ref{tab:vol-prediction} for different maturities $T$. As a simple benchmark we consider  the realized variance on the $T$ previous days as a predictor of $q_T(x)$. As a second, more challenging, benchmark we consider the recent Path-Dependent Volatility (PDV) model introduced in~\cite{guyon2021volatility}, which reads
\begin{equation}
\nonumber
\sqrt{q_T(x)} = \beta_0 + \beta_1 R_{1,t} + \beta_2 \sqrt{R_{2,t}}
\end{equation}
\begin{equation*}
\text{with} ~~ R_{1,t} = k_1\star \delta x(t) ~ , ~ R_{2,t} = k_2\star |\delta x|^2(t),
\end{equation*}
where $k_1$ and $k_2$ are both linear combinations of exponential kernels, acting on past returns and past absolute returns\footnote{Note that the authors of Ref.~\cite{guyon2021volatility} estimate realized daily variance using 5-min ticks for better estimation, but the scores we obtain with daily ticks are actually similar for the longest maturities $T=3$ and $T=5$ that were tested in their study. Hence, we do not think that using 5-min ticks would substantially change the conclusions reached below for $T \geq 7$ days.}. 
We take the very same kernels as specified in~\cite{guyon2021volatility} but we determine the linear regression coefficients $\beta_0,\beta_1,\beta_2$ in two different ways, see Appendix \ref{app:pdv}. The first one amounts to reproducing as well as possible the Scattering Spectra, as we did in section \ref{sec:avg-smile} to reproduce the average smile. The corresponding values of $\beta_{0,1,2}$ are given in Table~\ref{tab:pdv-unconditional-calib}. Unfortunately, this leads to rather poor results as far as volatility prediction is concerned, specially for large $T$, see Table \ref{tab:vol-prediction}. 

In order to favour the PDV model as much as possible, we used a different calibration aiming at best regressing $\sqrt{q_T}$ for each maturity $T$ separately, on the same train period as for the SS model, i.e. from January 2000 to December 2014, see Table~\ref{tab:pdv-conditional-calib} for calibrated values.

Using the very same shadowing paths for all maturities, our Path Shadowing Monte-Carlo method outperforms both the naive benchmark and the PDV model for all maturities from $T > 7$ days, and ties with PDV for $T=7$ days, see Table \ref{tab:vol-prediction}. In particular, our method upholds predictive power up to $\approx$ 150 days, which 
none of the two other methods are capable of. This is, we believe, due to the scale-invariance of both the SS model and our choice of path embedding (\ref{eq:choice-of-h}). Again, we insist on the fact that the PDV model parameters are refitted for each maturity $T$ whereas the SS model is calibrated once and for all. 
Of course, the SS model contains many more parameters than the PDV model, and we have not attempted to modify the shape of the kernel $k_1$ and $k_2$ introduced in~\cite{guyon2021volatility}, so there is clearly room for improvements there as well, which could be investigated in future work.

These results vindicate both the generative model presented in section \ref{sec:a-statistical-model} and the PS-MC method of the present section. In particular the Scattering Specra generative model, now based on 182 parameters determined on the period 2000-2014, is {\it not} over-fitting the training dataset.

\section{Option Pricing \& Trading Games}
\label{sec:option-pricing}

In section \ref{sec:avg-smile}, we have used the Hedged Monte-Carlo method to price options {\it unconditionally} within the SS model, i.e. by averaging over all possible price paths of a given length $T$. This allowed us to obtain {\it average smiles} as a function of maturity, which we compared to those obtained using the same procedure but with real S\&P trajectories. 

Now, at a given date, option prices reflect anticipations of the market, conditioned on present market conditions -- in particular the past price trajectory -- and any available information about the future, such as earning announcements, dividends, political events, etc. Of course, such events cannot be directly captured by a purely statistical model, however faithful it might be. Still, it is interesting to run the exercise of pricing option smiles that anticipates the future solely based on the past of underlying price process.

In this section we investigate this question by combining the SS model presented in section \ref{sec:a-statistical-model} with Path Shadowing introduced in section \ref{sec:path-shadowing}.
Option prices are then obtained by upgrading the Hedged Monte-Carlo method~\cite{potters2001HMC} with, as an input, shadowing paths generated by the model.
The overall level of the resulting smiles at time $t$ is nothing but the prediction of the future realized volatility for $t' \in [t,t+T]$, which was already investigated in the previous section. We now extend such prediction to the full shape of the smile. 
We assess the quality of our smiles by trading a buy-sell signal on options whenever the model option smile is telling us that the option is under-priced or over-priced compared to another smile model, or of the option market itself.

\subsection{Path Shadowing Hedged Monte-Carlo}

Hedged Monte-Carlo (HMC)~\cite{potters2001HMC} enables one to use time series of prices to compute the option prices. It iteratively determines the optimal price 
and hedging strategy by minimizing the expected financial risk of a portfolio containing the option to be priced and its hedge, at all times $t=T-1,T-2,\ldots,0$. The expected risk is computed as an average over paths, which in the present context are the
shadowing paths, based on the notion of distance induced by the path embedding $h$ (\ref{eq:choice-of-h}). 
This defines the Path Shadowing Hedged Monte-Carlo (PS-HMC) that can be used in a versatile way to price any derivative contract. We use the same Gaussian weights (\ref{eq:shadowing-MC}) and the very same parameters $\alpha,\beta,\eta$ detailed in section \ref{subsec:volatility-prediction}, that are optimal for volatility prediction within the model itself.

Fig.~\ref{fig:cond-smile} shows the resulting smiles as a function 
of rescaled log-moneyness,
for different maturities and at 3 typical dates. As one would have hoped, the level, but also the slope and the curvature of those smiles do depend on the chosen date, and more precisely on the actual path trajectory of the price before that date.  

\begin{figure}[!h]
\centering
\includegraphics[width=0.95\linewidth]{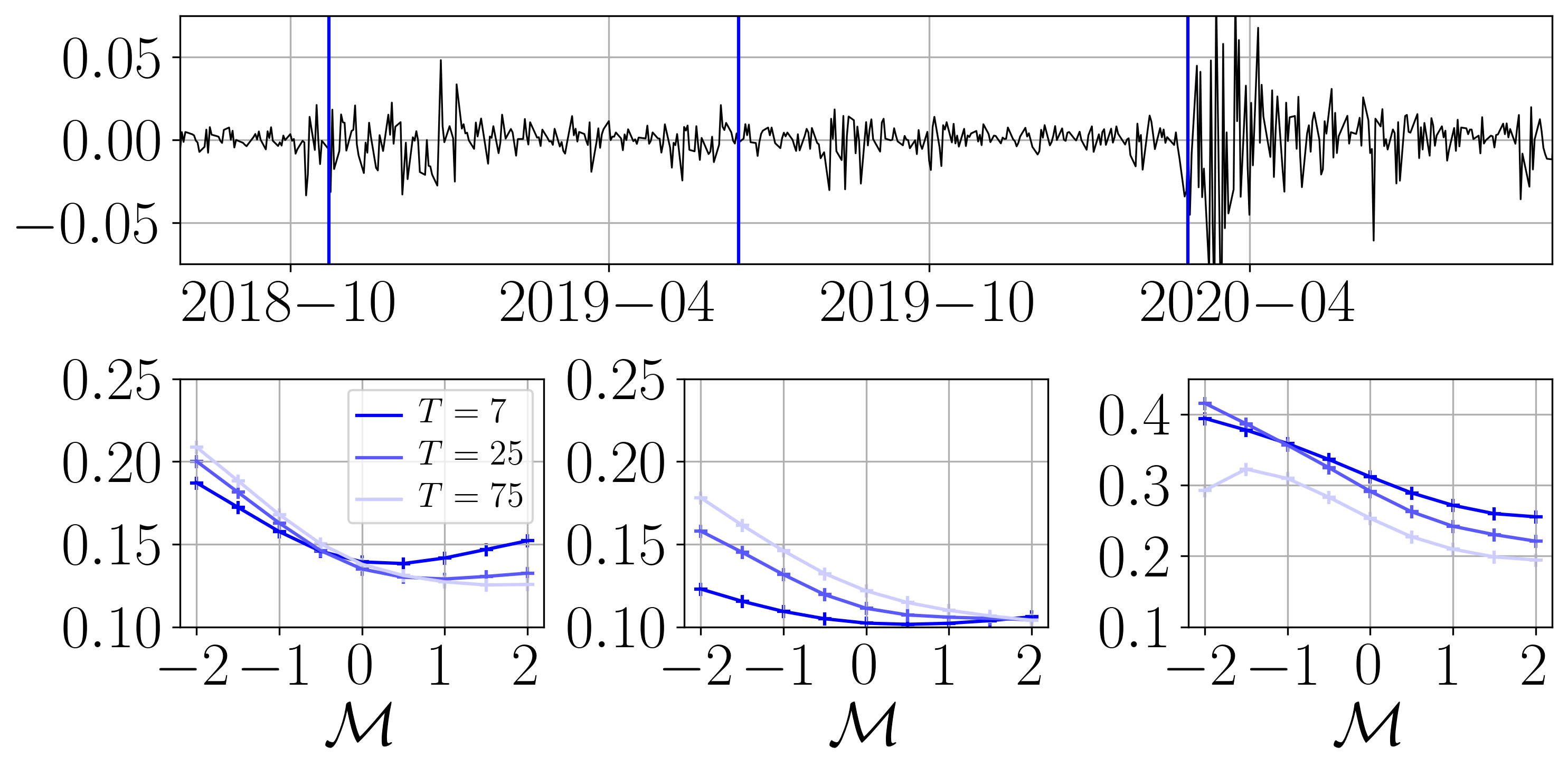}
\caption{
Conditional smiles in a Scattering Spectra (SS) model at 3 different dates 2018-10-23, 2019-06-14 and 2020-02-26.
They are computed through hedged Monte-Carlo on shadowing paths generated using the SS model.
Note that the level, the slope and the curvature of those smiles all strongly depend on the chosen date. 
}
\label{fig:cond-smile}
\end{figure}

\subsection{A Trading Game}

In order to assess the quality of the smiles predicted by the SS model, we set up the following trading game. 
We trade call options at several dates $t$ on the option market. We neglect bid-ask spread and consider the option price $\mathcal{C}^\text{\sc{MKT}}(t,T,K)$ to be the quoted mid-price. We denote $\sigma^\text{\sc{MKT}}(t,T,K)$ the observed implied volatility computed from $\mathcal{C}^\text{\sc{MKT}}(t,T,K)$ and $\sigma^\text{model}(t,T,K)$ the implied volatility computed within the model that we decide to trade with. 

We test the following trading strategy: buy the corresponding option 
from the market
whenever we deemed it under-priced, i.e. $\sigma^\text{\sc{MKT}}(t,T,K)<\sigma^\text{model}(t,T,K)$ or sell it if we deem it over-priced, i.e. $\sigma^\text{\sc{MKT}}(t,T,K)>\sigma^\text{model}(t,T,K)$. We then follow the hedged option until maturity and register the corresponding profit or loss associated to the trade. 

The buy-sell signal of such strategy is thus given by 
\[
\eps_t = \left\{
\begin{array}{ll}
    +1 & \mbox{if } 
    \sigma^\text{\sc{MKT}}(t,T,K) < \sigma^\text{model}(t,T,K),
    \\
    -1 & \mbox{if }
    \sigma^\text{\sc{MKT}}(t,T,K) > \sigma^\text{model}(t,T,K).
\end{array}
\right.
\]
The un-hedged forward $\mathrm{P\&L}_t(T,K)$ of one transaction is obtained as
\begin{equation}
\nonumber
\mathrm{P\&L}_t(T,K) = v_t\eps_t
\bigg((e^{x_{t+T}}-K)_+ - \mathcal{C}^\text{\sc{MKT}}(t,T,K)\bigg)
\end{equation}
where $v_t$ is the volume of option traded. 
In order to remove non-stationary effect due to the long-term change in the value of the underlying, we take $v_t = 100/S_t=100e^{-x_t}$ which amounts to trade options on percentage of variation of the underlying rather than the underlying itself. 

To reduce the variance of the strategy, we hedge the option using a simple Black-Scholes delta-hedge with a constant volatility $0.2$. Such delta-hedge gives zero profit on average but reduces greatly (though not optimally, see~\cite{potters2001HMC,bouchaud2003theory} for details) the variance of $\mathrm{P\&L}_t$. 

In the following, we will consider the model smile $\sigma^\text{model}$ to be given either by the smile computed in the SS model using PS-HMC (model=SS) or the smile computed in a Path-Dependent Volatility model \cite{guyon2021volatility} (model=PDV), both using HMC. The two models (PDV and SS) are calibrated using the same data in the train period i.e. January 2000-December 2014. 
As in the previous section, the PDV model parameters are furthermore {\it re-optimized} for each maturity $T$ independently as in Table \ref{tab:pdv-conditional-calib}, whereas the SS model is parameterized once and for all with the Scattering Spectra determined in the train period.

The trading game is then in both cases played out-of-sample, for all 2112 dates $t$ from January 2015 to May 2023. 
We choose $5$ maturities $T=\, 8,\, 25,\,50,\,75,\,150$ and $9$ strikes at constant rescaled log-moneyness $\mness=-2,-1.5,-1,-0.5,0,0.5,1,1.5,2$.  

Detailed $\mathrm{P\&Ls}$ over 3 different periods of 3 year each are shown in Fig.~\ref{fig:smile-prediction-avg}. Their variance across dates $t$ is shown in Fig.~\ref{fig:smile-prediction-std} in Appendix \ref{app:trading-game}. For most maturities and strikes, the trading game using the SS model yields positive $\mathrm{P\&Ls}$ and clearly outperforms the trading game using the PDV model. In fact, one can directly play the SS model against the PDV model without any reference to the actual option market, fully confirming that the SS model outperforms PDV model  for almost all maturities and strikes, see Appendix \ref{app:trading-game} and in particular Fig.~\ref{fig:pdv_ssb_game}. 

Since the $\mathrm{P\&Ls}$ are of the same order of magnitude across different strikes and maturities, we average them over all the maturities and strikes. Table \ref{tab:smile-prediction} shows such grand averages and reveals that the trading game using the SS model is significantly more profitable 
than using the PDV model, with furthermore less variance across different periods. This is confirmed by the aggregated $\mathrm{P\&Ls}$ over time, see Fig.~\ref{fig:smile-prediction-aggregated}.

\begin{table}[!h]
\centering
\begin{tabular}{|c||c|c|c|c|}
\hline
& full period & $2015$-$2017$ & $2018$-$2020$ & $2021$-$2023$ 
\\
\hline
\hline
PDV & 
0.071 $\pm$ 0.03 & 
0.13 $\pm$ 0.01 & 
0.0 $\pm$ 0.04 &  
0.10 $\pm$ 0.02
\\
\hline 
SS & 
\textbf{0.13} $\pm$ 0.03 & 
\textbf{0.14} $\pm$ 0.02 & 
\textbf{0.14} $\pm$ 0.04 & 
\textbf{0.11} $\pm$ 0.02
\\
\hline
\end{tabular}
\caption{Average $\mathrm{P\&L}$ of the trading game against the S\&P market using the PDV model (re-optimized for each $T$) or using the Scattering Spectra (SS) model, both calibrated using data from 2000 to 2014. Detailed $\mathrm{P\&Ls}$ are shown in Fig.~\ref{fig:smile-prediction-avg}.}
\label{tab:smile-prediction}
\end{table}
\begin{figure}[!h]
\centering
\includegraphics[width=0.9\linewidth]{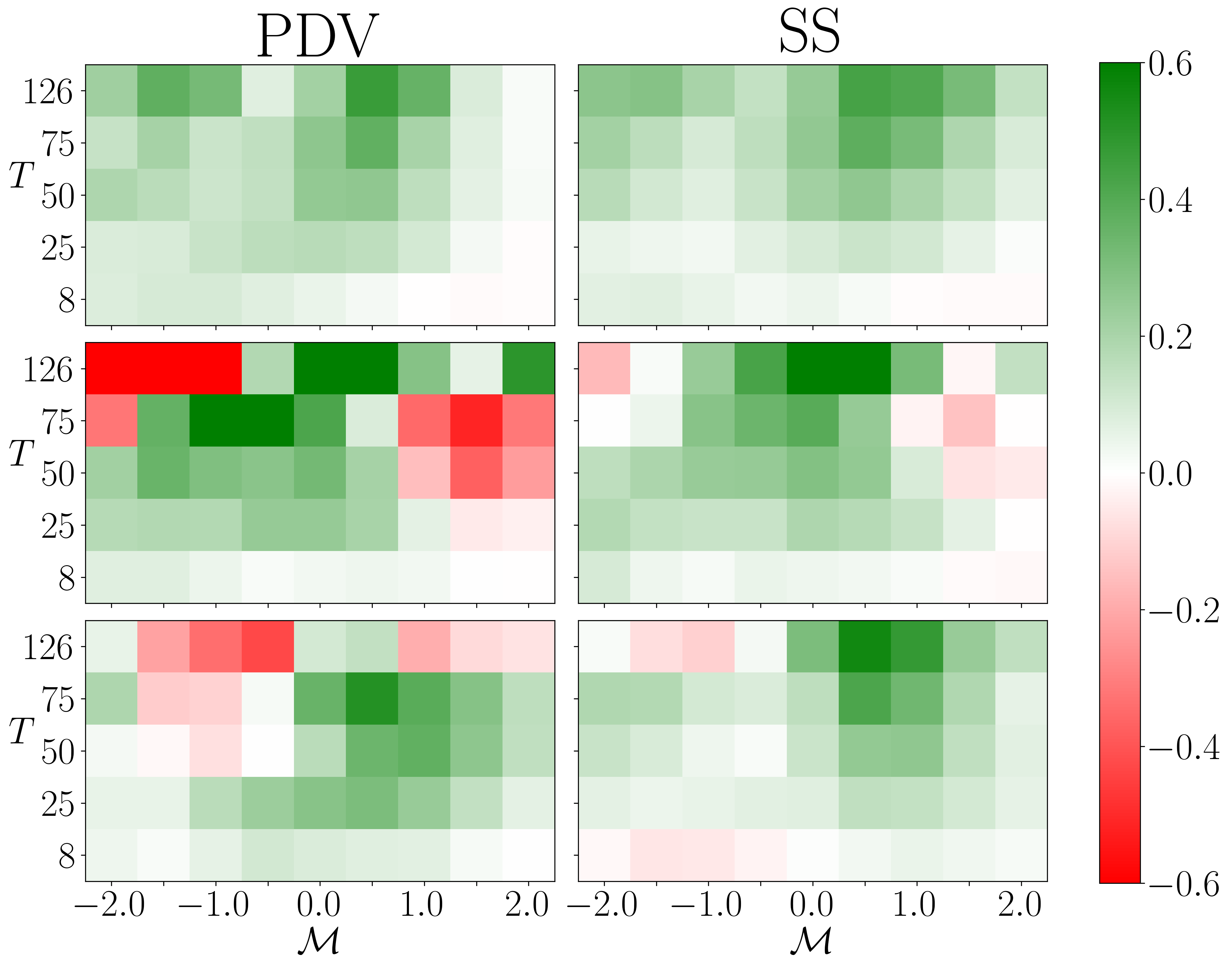}
\caption{
$\mathrm{P\&Ls}$ of the trading game against the S\&P market with a PDV model (re-optimized for each $T$) or with the Scattering Spectra (SS) model, both calibrated using data from 2000 to 2014. Each heatmap corresponds to a 3 years period, from top to bottom (2015-2017, 2018-2020, 2021-2023). }
\label{fig:smile-prediction-avg}
\end{figure}
\begin{figure}[!h]
\centering
\includegraphics[width=1\linewidth]{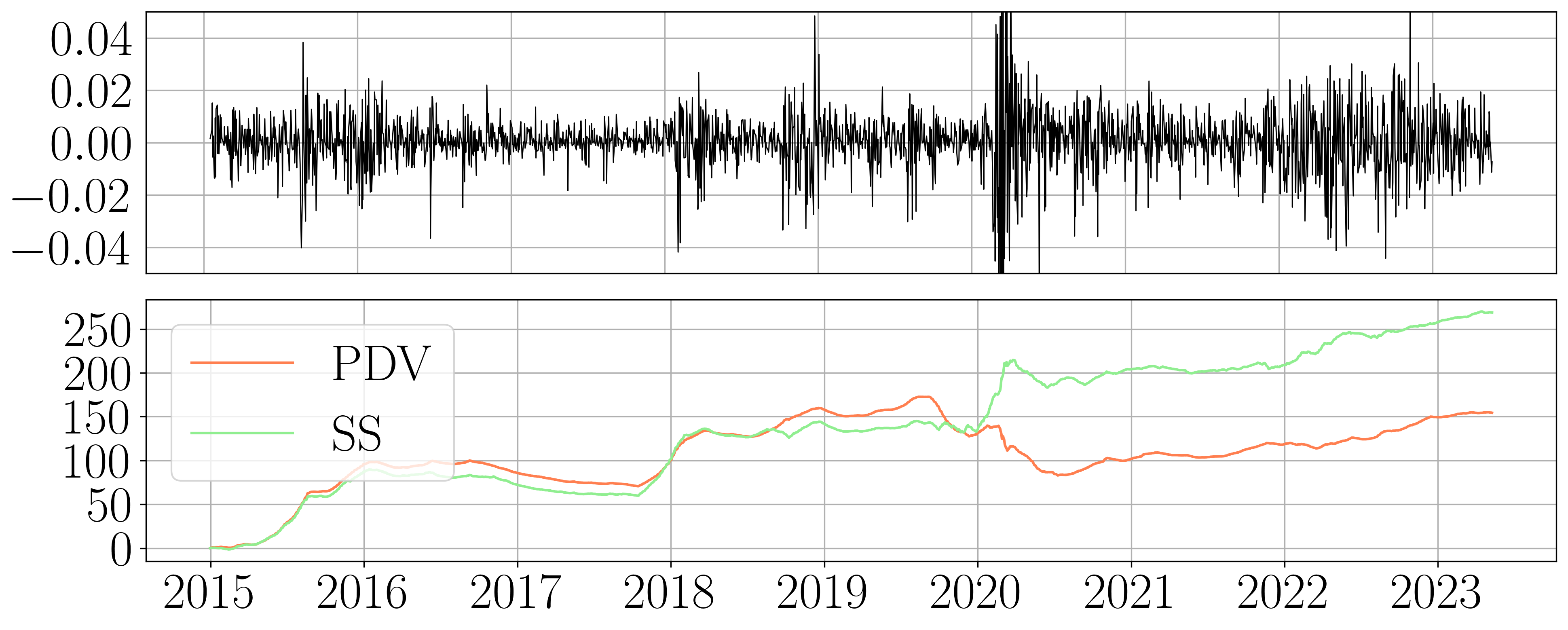}
\caption{(Bottom) Aggregated $\mathrm{P\&Ls}$ of the trading game against the S\&P market, played from 2015 to 2023, with a PDV model (re-optimized for each $T$) or with the Scattering Spectra (SS) model, both calibrated using data from 2000 to 2014.
(Top) Time series of S\&P log-return on the trading game period.
}
\label{fig:smile-prediction-aggregated}
\end{figure}

\newpage

\section{Conclusion}

We presented a statistical model of financial prices based on the Scattering Spectra introduced in our previous work~\cite{morel2022scale}. Scattering Spectra are multi-scale analogues of the standard skewness and kurtosis. Such a model achieves a tradeoff between accuracy and diversity. It captures main statistical properties of prices as well as several scaling properties of option smiles. As a maximum entropy model with a small number of constraints, our Scattering Spectra model also avoids over-fitting.

We then introduced Path Shadowing Monte-Carlo (PS-MC) which enables building models of forward looking probabilities $p(x|\pa{x})$ from any generative model of $p(x)$. Combined with our statistical model of prices, PS-MC provides state-of-the-art volatility predictions. 
Shadowing paths can also be used to obtain option smiles that depend only on the distribution of the price process. 
A trading game allowed us to show that the Scattering Spectra model correctly anticipates future price movements and favourably competes with the (simplest version of) the recently introduced Path-Dependent Volatility model. Extending our analysis to other underlying assets, such as single stocks, is of course important to validate our conclusions when price moves are more idiosyncratic.

Beyond prediction, Path Shadowing may be a way of tackling other burning questions in Finance, such as \textit{typicality}: how typical or atypical is a given sequence of price changes?
As far as Path Shadowing Monte-Carlo is concerned, one limitation of the method is that it requires to scan a large dataset of generated paths for delivering good performances. This scanning step could be done more efficiently. In particular, could one find `typical' price paths that should be frequently selected for prediction in order to save intensive scanning efforts? 
Another highly relevant extension is towards the description of multivariate time series. We hope to address these issues in the near future using the methods introduced in this work. 

\paragraph*{Acknowledgments} We want to thank C. Aubrun, N. Cuvelle-Magar, S. Crépey, J. Gatheral, F. Guth, J. Guyon, B. Horvath, G. Pagès, M. Potters \& V. Vargas for many insightful conversations on these topics.

\bibliographystyle{IEEEtran}
\bibliography{IEEEabrv,biblio.bib}

\appendices

\section{Sampling Algorithm}
\label{app:sampling}

Drawing samples from the Scattering Spectra model (\ref{eq:exp-model}) could be performed through an exact algorithm~\cite{rolf1998microsampling}. This is however computationally expensive when the number of statistics $\Phi$ is not small. Instead we use an approximate algorithm based on gradient descent~\cite{bruna2019multiscale}. We consider the set of trajectories $x$ having approximately the correct statistics
\begin{equation}
\nonumber
\Omega_\eps = \{ x\in\R^N \ | \ \| \Phi(x) - \Phi(\widetilde x)\|  \leq \eps  \}.
\end{equation}
The width $\epsilon$ accounts for the statistical error between $\Phi(\sple{x})$ and the true $\E\{\Phi(x)\}$. We chose $\epsilon=10^{-3}\|\Phi(\sple{x})\|$.
An approximate sampling of model (\ref{eq:exp-model}) is performed by sampling trajectories in $\Omega_\eps$~\cite{bruna2019multiscale}.
The algorithm is initialized with a realization $x\in\R^N$ of a Gaussian noise which has a maximum entropy distribution. Then we minimize the loss $\ell(x) = \|\Phi(x) - \Phi(\sple{x})\|^2$ through gradient descent
\[
x \longleftarrow x - \gamma \nabla \ell(x).
\]
until the condition $\ell \leq \eps$ is reached. 
The gradient descent is implemented with the L-BFGS-B algorithm.

\section{Wavelets}
\label{app:wavelet}

We impose that the wavelet $\psi$ satisfies the following energy conservation law called the Littlewood-Paley equality
\begin{equation}
\label{littlewood}
\forall \om > 0~~,~~ \sum_{j=-\infty}^{+\infty} |\widehat {\psi} (2^j \om)|^2 = 1 .
\end{equation}
A Battle-Lemarié wavelet \cite{battle1987block, lemarie1988ondelettes} is an example of such wavelet, shown in Fig.~\ref{fig:wavelet}. 
It has an exponential decay away from $t=0$ and has $m = 4$ vanishing moments
\begin{figure}[!h]
\centering
\includegraphics[width=0.9\linewidth]{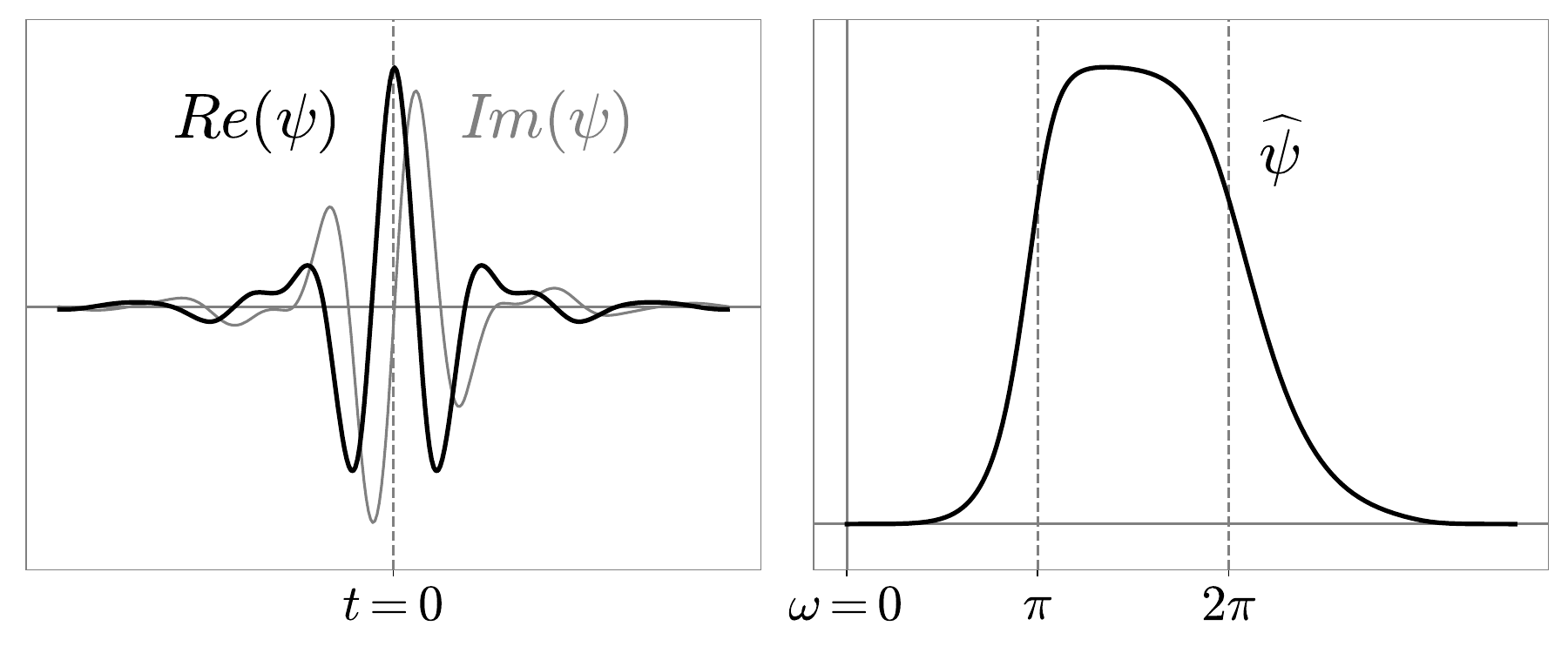}
\caption{Left: complex Battle-Lemarié wavelet $\psi(t)$ as a function of $t$.
Right: Fourier transform $\widehat{\psi}(\om)$ as a function of $\om$.}
\label{fig:wavelet}
\end{figure}

The wavelet transform is computed up to a largest scale $2^J$ which is smaller than the signal size $N$.
The signal lower frequencies in $[-2^{-J} \pi, 2^{-J} \pi]$ are captured by a low-pass filter $\varphi_J(t)$ whose Fourier transform is
\begin{equation}
\label{lowpass}
\widehat{\varphi}_J (\om) = {\Big(\sum_{j=J+1}^{+\infty} |\widehat{\psi} (2^j \om)|^2\Big)^{1/2}} . 
\end{equation}
One can verify that it has a unit integral $\int \varphi_J(t)\, {\rm d}t = 1$. 
To simplify notations, we write this low-pass filter as a last scale wavelet
$\psi_{J+1} = \varphi_J$, and $W_{J+1}x(t) = x \star \psi_{J+1}(t)$. By applying the Parseval formula,
we derive from (\ref{littlewood}) that for all $x$ with $\|x\|^2 = \int |x(t)|^2\,{\rm d}t < \infty$
\begin{equation}
\nonumber
\|W x \|^2 = \sum_{j=- \infty}^{J+1}\|x \star \psi_j \|^2 = \|x\|^2 .
\end{equation}
The wavelet transform $W$ preserves the norm and is therefore invertible, with a stable inverse.

Properties of signal increments are carried over to wavelet coefficients by observing that wavelet coefficients are obtained by filtering signal increments
$\delta_j x(t) = x(t) - x(t - 2^j)$ with a dilated integrable filter:
\begin{equation}
\label{dXvsWx}
x \star \psi_j (t) = \delta_j x \star \theta_j (t)~~
\mbox{where}~~ \theta_j (t) = 2^{-j} \theta(2^{-j} t),
\end{equation}
where filter $\theta$ is obtained from $\psi$ through
$\widehat{\theta} (\om) = \widehat{\psi}(\om)\, / \, (1 - e^{-i \om})$. 
This is because $1-e^{i2^j\om}$ is the Fourier transform of $\delta(t) - \delta(t-2^j)$, the filter that creates increments.
From (\ref{dXvsWx}) we get that if $\delta_j x(t)$ is stationary then $x \star \psi_j (t)$ is also stationary.

\begin{figure}[!h]
\centering
\begin{subfigure}[b]{0.49\textwidth}
    \centering
    \includegraphics[width=\textwidth]{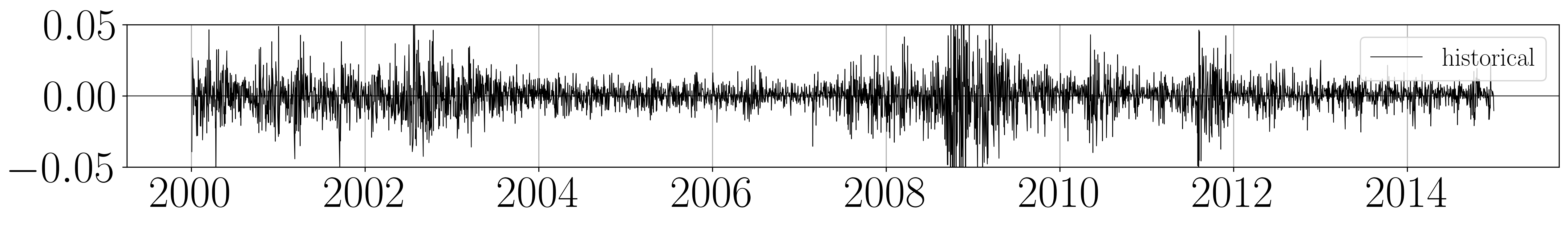}
\end{subfigure}
\vskip\baselineskip
\begin{subfigure}[b]{0.49\textwidth}   
    \centering 
    \includegraphics[width=\textwidth]{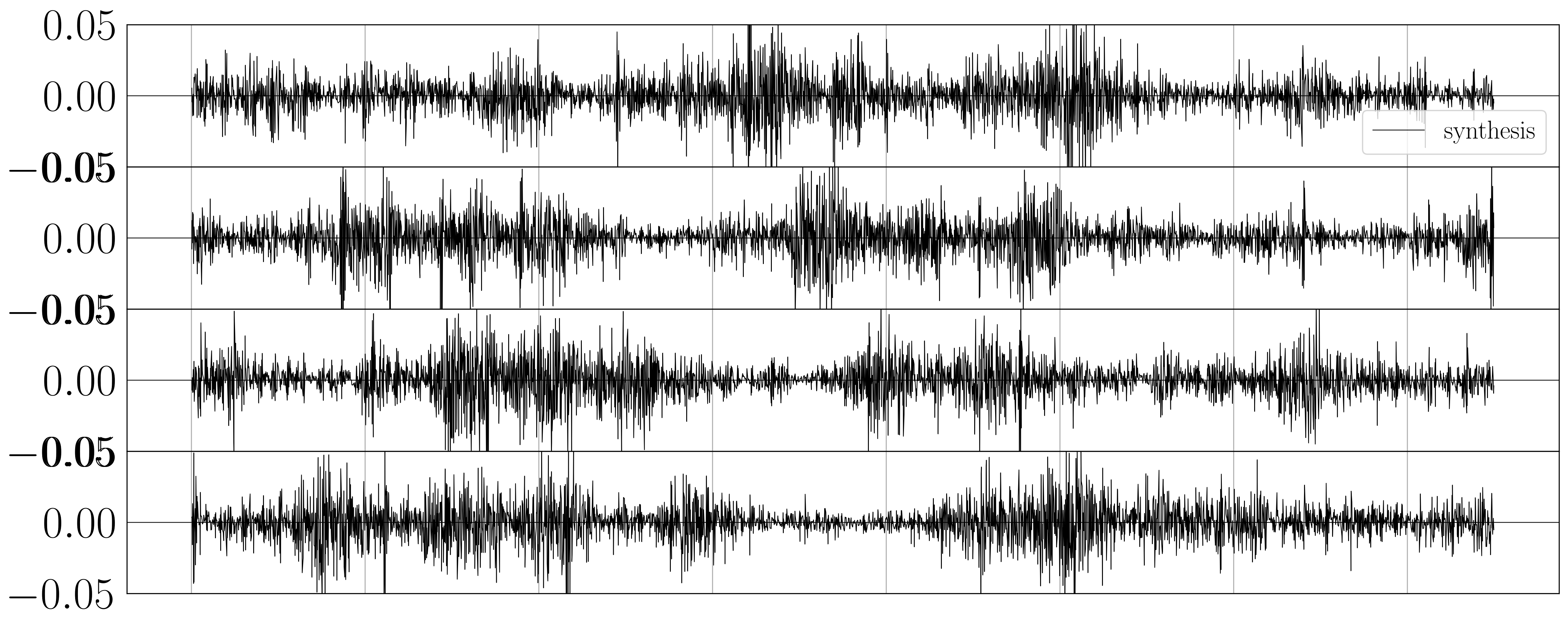}
\end{subfigure}
\caption{(Top) Historical S\&P log-returns from 2000 to 2014. (Bottom) Examples of generated syntheses from the Scattering Spectra model, which are scanned for shadowing paths.}
\label{fig:generated-samples}
\end{figure}

\section{Smile Sensitivity in the Scattering Spectra Model}
\label{app:smile-sensitivity}

The Scattering Spectra model defined in (\ref{eq:exp-model}) depends only on the estimated values $\Phi(\sple{x})$ of the Scattering Spectra estimated on a single realization $\sple{x}$ of S\&P. This models gives an unconditional smile shown in Fig.~\ref{fig:smile-avg-ss}.

We are interested in the change in this unconditional smile in the case where the statistics $\Phi(\sple{x})$ change significantly. 
Thanks to the interpretation of Scattering Spectra coefficients, we can see what happens to the smile if the market is `more skewed' or `more kurtic' for example.

Here we focus on the shape of the smiles. Of course, changing the amplitude of $\Phi_2$, which is equivalent to increasing the overall volatility of the model, only moves the overall level of the smile up or down.

These sensitivities in general can be seen as amplifying or reducing the departure of the price process from a Gaussian process $x_\text{Gaussian}$, also called Black-Scholes model, with the same average volatility, meaning the same value of $\Phi_2(\sple{x})$. 
\[
\text{new~statistics} = (1-\lambda) \Phi(x_\text{Gaussian}) + \lambda \Phi(\sple{x}).
\]
If $\lambda<1$ the corresponding model should be `closer' to a Black-Scholes model, if  $\lambda>1$ the corresponding model should become less `Gaussian'.
Fig.~\ref{fig:smile-sensitivity} shows 4 directions of change that are detailed below.
\\
\noindent\textbf{Skewness $|\Phi_3(x)|$.}
The skewness coefficients $\Phi_3(x_\text{Gaussian})$ should be zero for a Gaussian process. 
We consider a model of $x$ with modified statistics
\[
\Phi_3(x) = \lambda \Phi_3(\sple{x})
\]
for 3 values of $\lambda$.
For $\lambda=0$, the modeled process is not skewed, meaning that an increment trajectory $\delta x$ is equally likely as a trajectory $-\delta x$. Unsurprisingly, we get smiles that are symmetrical at $\mness=0$. 
For $\lambda=1$ we get the same smile as in the Scattering Spectra model of S\&P. For $\lambda=1.3$, the smile has a higher downward slope, as expected.
\\
\noindent\textbf{Time-asymmetry Im$\,\Phi(x)$.} Our representation $\Phi$ is complex-valued. While $\Phi_1,\Phi_2$ are real, skewness $\Phi_3$ and kurtosis $\Phi_4$ may have non-zero imaginary parts that were shown to characterize certain types of time-asymmetry.
We consider a model $x$ whose statistics are 
\[
\Phi(x) = \text{Re}\, \Phi(\sple{x}).
\]
It is thus time-reversible, i.e. a trajectory $x(t)$ is equally likely as $x(-t)$. 
We notice that its smiles are symmetrical, but compared to the previous case, these are not symmetrical around $\mness=0$ but around values $\mness>0$ depending on the maturity. 
This is consistent since the process still has non-zero skewness $|\Phi_3(x)|$. 
\\
\noindent\textbf{Kurtosis $\Phi_1$.} For a Gaussian process, $\Phi_1(x_\text{Gaussian})=\pi/4$. We consider a model $x$ with modified statistics
\[
\Phi_1(x) = (1-\lambda) \frac{\pi}{4} + \lambda\Phi_1(\sple{x}).
\]
For $\lambda=0.5$, the model is less kurtic than the S\&P, it shows smiles that tend to flatten around a straight line with negative slope. 
For $\lambda=1.75$, the model is more kurtic and the smiles have more curvature, as expected.
\\
\noindent\textbf{Kurtosis $\Phi_4$.} 
We consider a model $x$ with modified statistics
\[
\Phi_4(x) = (1-\lambda) \Phi_4(x_\text{Gaussian}) + \lambda\Phi_4(\sple{x}).
\]
The change in smiles for two different values $\lambda=0.5,1.5$ seem small compared to the other effects presented, however such changes impacts a lot the trajectories. 

\begin{figure}[h]
\centering
\includegraphics[width=1.0\linewidth]{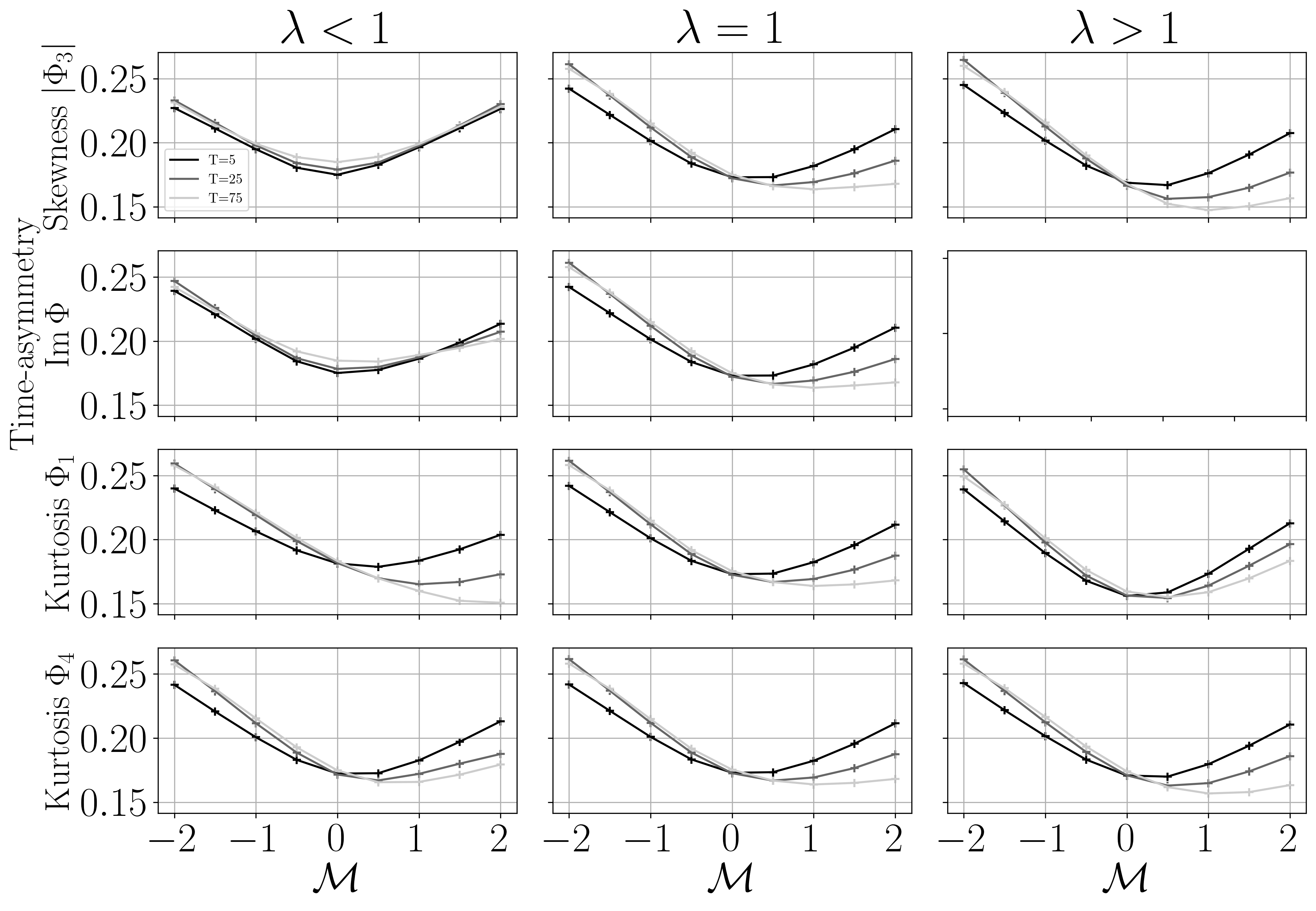}
\caption{Smile sensitivity to change in Scattering Spectra statistics $\Phi(\sple{x})$, decomposed as changes in skewness $|\Phi_3|$, time-asymmetry Im$\,\Phi$, kurtosis $\Phi_1$ or kurtosis $|\Phi_4|$. 
A value $\lambda<1$ indicates respectively, no skewness, no time-asymmetry, less kurtosis. A factor $\lambda=1$ does not change the statistics $\Phi(\sple{x})$ estimated on S\&P.
Besides well-known influence of the skewness and kurtosis on the shape of the smile,  
the Scattering Spectra $\Phi(x)$ also explains the contribution of time-asymmetry in the shape of the smile.}
\label{fig:smile-sensitivity}
\end{figure}

\section{The Path-Dependent Volatility Model}
\label{app:pdv}

The Path-Dependent Volatility (PDV) model introduced in~\cite{guyon2021volatility} consists of a 4-factor Markovian model with $9$ parameters. Writing the price process as $S_t=S_0 e^{x_t}$, it assumes that
\begin{align*}
\frac{dS_t}{S_t} & = \sigma_t dW_t,
\\
\sigma_t & = \sigma(R_{1,t},R_{2,t}),
\\
\sigma(R_1,R_2) & = \beta_0 + \beta_1 R_1 + \beta_2 \sqrt{R_2}
\\
R_{1,t} & = \int_{-\infty}^t K_1(t-u)\frac{{\rm d}S_u}{S_u}
\\
R_{2,t} & = \int_{-\infty}^t K_2(t-u)\big(\frac{{\rm d}S_u}{S_u}\big)^2
\end{align*}
Among these parameters, $6$ are used to parameterize the kernels $K_1,K_2$ both being 
a linear combination of exponentials, and $3$ are the regression coefficients $\beta_0,\beta_1,\beta_2$. 
The $6$ kernel parameters are set to the optimal values calibrated on S\&P and presented in~\cite{guyon2021volatility} (Table 11). 

To obtain unconditional smiles in the PDV model, the process is evolved with $1$ step per day until it reaches a stationary regime.
The $3$ regression coefficients $\beta_0,\beta_1,\beta_2$ are calibrated on the Scattering Spectra statistics $\Phi$ and reported in Table~\ref{tab:pdv-unconditional-calib}. We choose the values that minimize the distance $\|\Phi(x_\text{PDV}) - \Phi(\sple{x})\|_2$ where $\sple{x}$ is the S\&P log-price time series consisting of $N=5827$ days from January 2000 to April 2023 and $x_\text{PDV}$ are 10 realizations of same size from the PDV model. 
The Scattering Spectra of the `closest' PDV model to the S\&P are shown on Fig.\ref{fig:scat-spectra}.

\begin{table}[htbp]
\centering
\begin{subtable}{0.45\textwidth}
    \centering
    \begin{tabular}{|c|c|c|c|}
        \hline
        & $\beta_0$ & $\beta_1$ & $\beta_2$
        \\
        \hline
        \hline
        unconditional & 0.050 & -0.13 & 0.56
        \\
        \hline
    \end{tabular}
    \caption{Optimal parameters of the Path-Dependent Volatility model calibrated on Scattering Spectra statistics over the full period 2000-2023.
    }
    \label{tab:pdv-unconditional-calib}
\end{subtable}
\hfill
\begin{subtable}{0.45\textwidth}
    \centering
    \begin{tabular}{|c|c|c|c|}
        \hline
        & $\beta_0(T)$ & $\beta_1(T)$ & $\beta_2(T)$
        \\
        \hline
        \hline
        $T=1$ & 0.022 & -0.062 & 0.64
        \\
        \hline
        $T=7$ & 0.035 & -0.066 & 0.75
        \\
        \hline
        $T=8$ & 0.036 & -0.065 & 0.74 
        \\
        \hline
        $T=25$ & 0.051 & -0.050 & 0.68
        \\
        \hline
        $T=50$ & 0.066 & -0.040 & 0.62
        \\
        \hline
        $T=75$ & 0.079 & -0.039 & 0.55
        \\
        \hline
        $T=126$ & 0.098 & -0.033 & 0.46
        \\
        \hline
        $T=150$ & 0.10 & -0.030 & 0.43
        \\
        \hline
    \end{tabular}
    \caption{Optimal parameters of the Path-Dependent Volatility model on the period 2000-2014 (train period), obtained through linear regression of future realized volatilities over $T$ days.}
    \label{tab:pdv-conditional-calib}
\end{subtable}
\caption{Path-Dependent Volatility (PDV) model calibration.}
\end{table}

It shows that the kurtosis $\Phi_4(x)$ is significantly lower than for the S\&P500 data, see also Fig.~\ref{fig:atm-kurtosis}.
Looking at the log-return trajectories shown in Fig.~\ref{fig:stat-pdv} we indeed notice clear qualitative discrepancies: the PDV log-returns are less intermittent than the S\&P ones. 
In Fig.~\ref{fig:stat-pdv} we see that structure functions (b) are poorly reproduced. Furthermore, the short term leverage effect is much too strong, see and, quite strikingly, Fig.~\ref{fig:pdv-leverage} and Fig.~\ref{fig:atm-skew}.

We could also choose parameters $\beta_0,\beta_1,\beta_2$ as the one specified in~\cite{guyon2021volatility}, using optimal parameters for the regression of future realized volatility but it does not improve the results in Figs.~\ref{fig:stat-pdv},\ref{fig:avg-smile}.
In the former case, the PDV trajectories feature a larger skewness $\Phi_3$ than the S\&P with abnormal negative log-return values, as shown in~\cite{guyon2021volatility}.
In the later case, the intermittency, measured in particular by $\Phi_4$, is much lower than the S\&P.

This showcases the difficulty for a low-parametric model to capture a rich set of statistics such as the Scattering Spectra. 

To obtain good volatility prediction and conditional smiles across different maturities $T$, to be used in trading games (see section \ref{sec:option-pricing}), we had to recalibrate the parameters $\beta_0(T),\beta_1(T),\beta_2(T)$ for each maturity $T$ independently, this in order to provide the best prediction of the future realized variance i.e. the overall level of the smile.
Calibrated values are shown on Table~\ref{tab:pdv-conditional-calib}.
\begin{figure}
\centering 
\includegraphics[width=\linewidth]{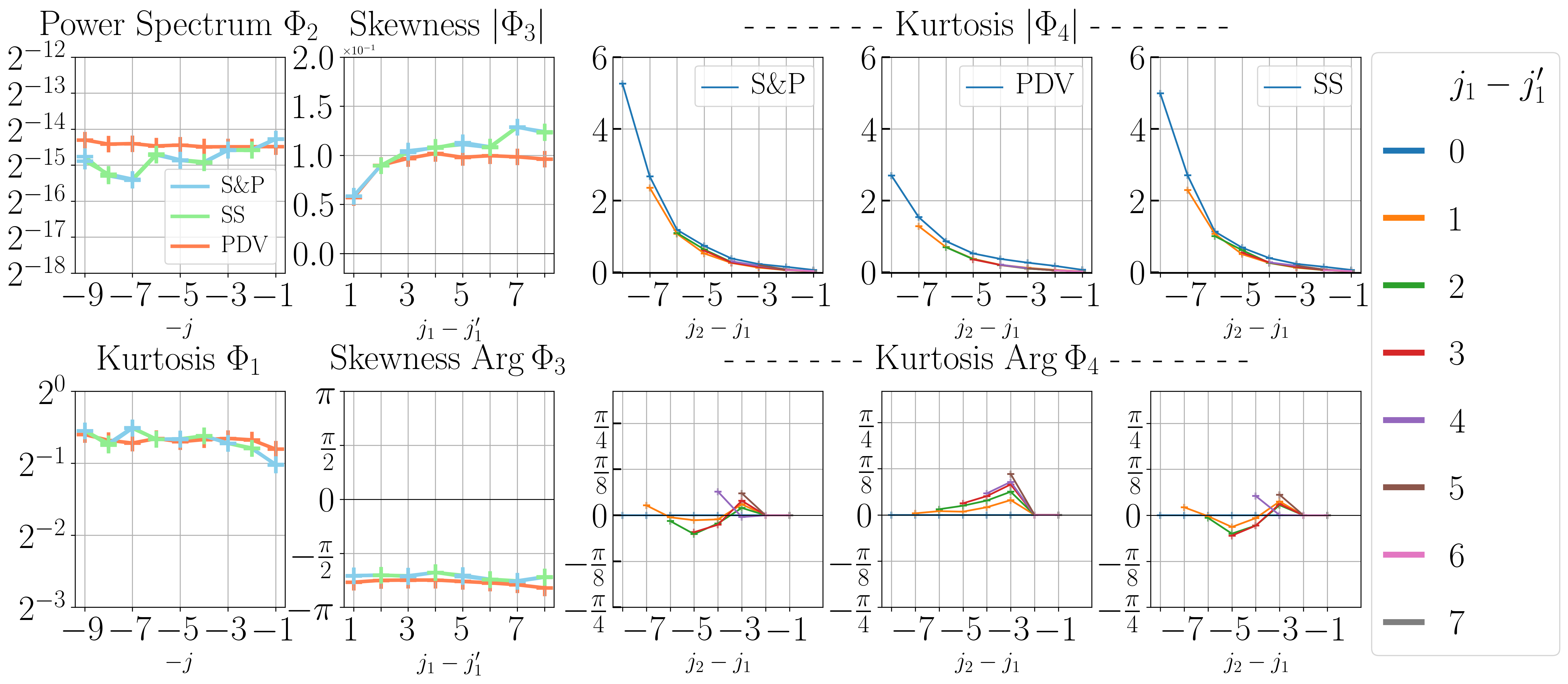}
\caption{Visualization of the Scattering Spectra statistics $\Phi=(\Phi_1,\Phi_2,\Phi_3,\Phi_4)$.
These are estimated on the observed S\&P500 realization, in a Path-Dependent Volatility (PDV) model and in the Scattering Spectra (SS) model.
}
\label{fig:scat-spectra}
\end{figure}
\begin{figure}
\centering
\begin{subfigure}[b]{\linewidth}
    \centering
    \includegraphics[width=\textwidth]{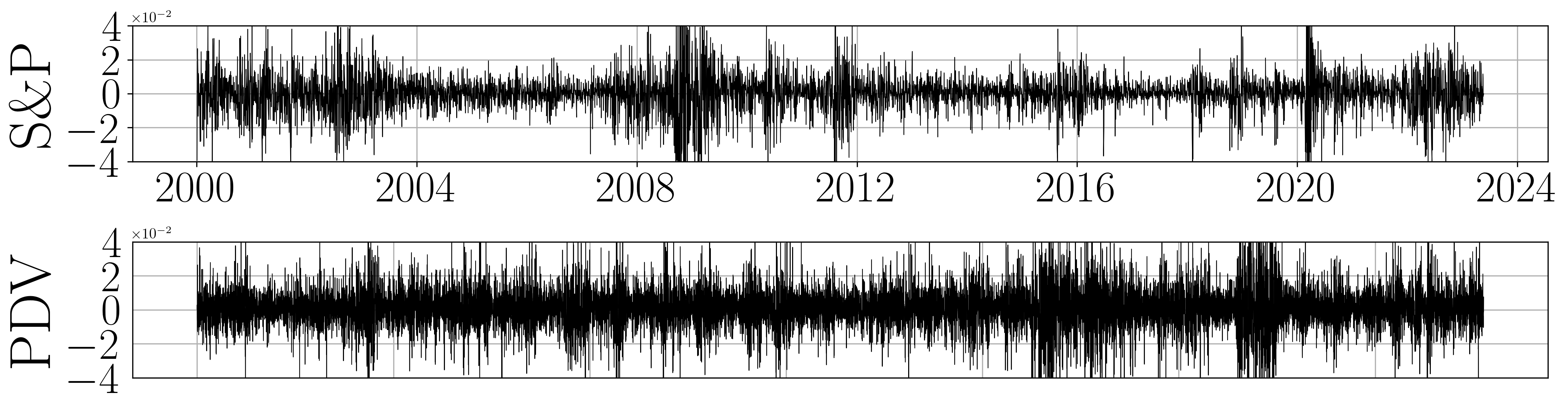}
\end{subfigure}
\vskip\baselineskip
\begin{subfigure}[b]{0.32\linewidth}
    \centering
    \includegraphics[width=\textwidth]{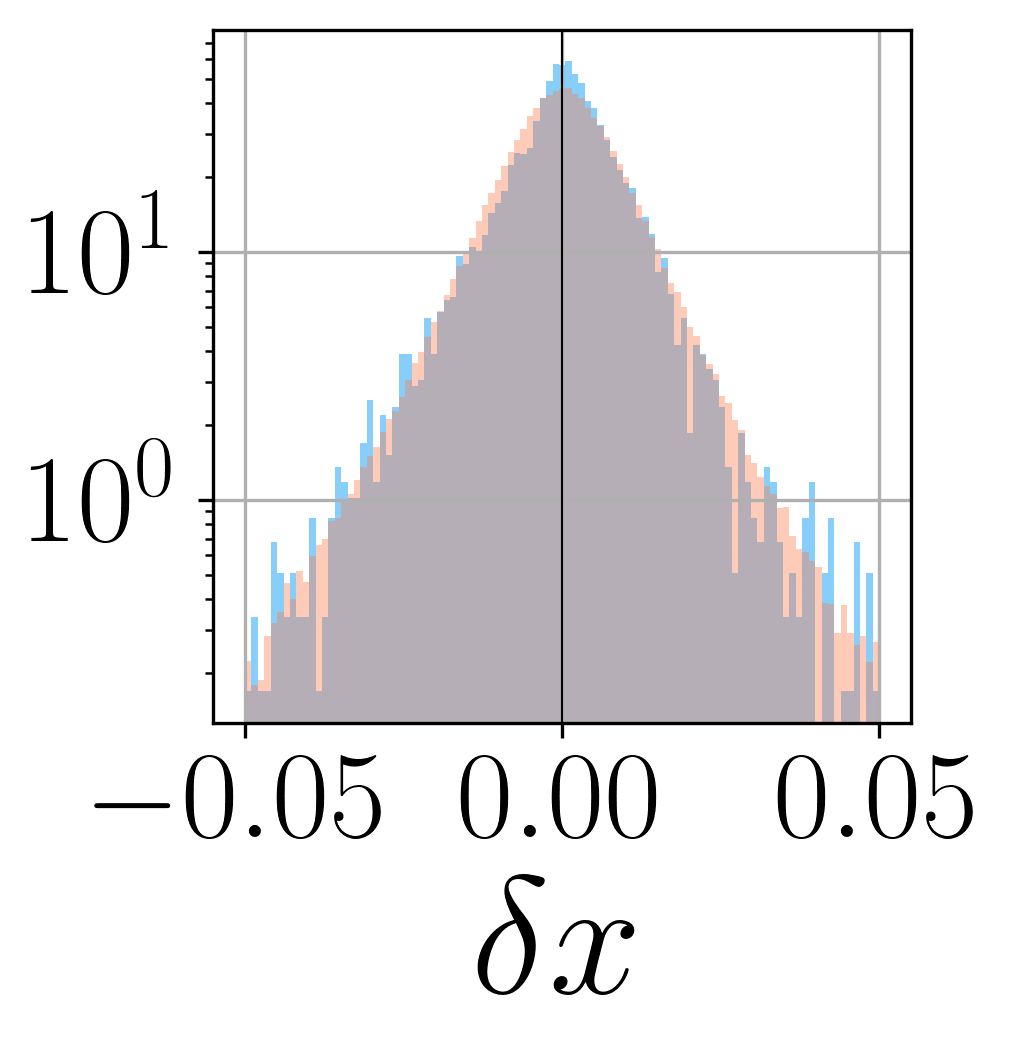}
    \caption{Histogram}
\end{subfigure}
\begin{subfigure}[b]{0.32\linewidth}  
    \centering 
    \includegraphics[width=\textwidth]{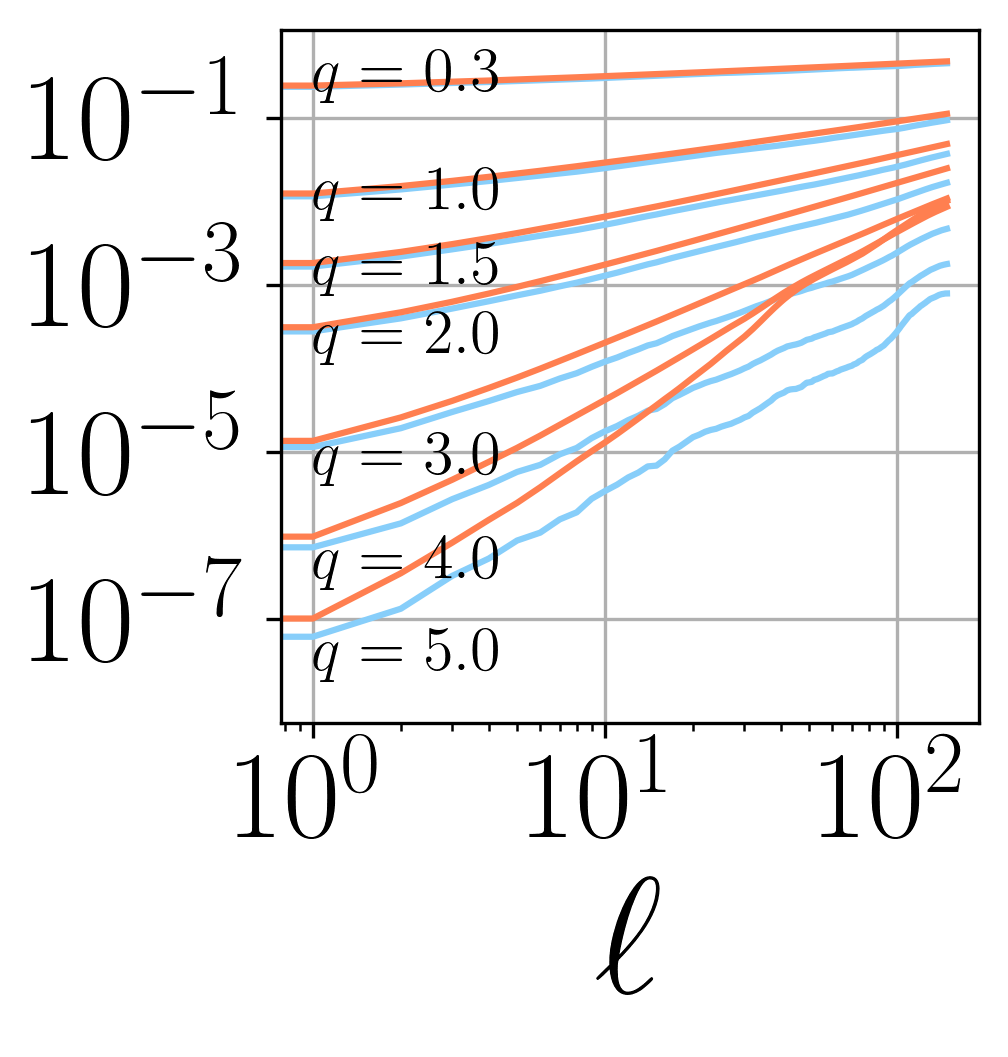}
    \caption{Struct. functions}
\end{subfigure}
\begin{subfigure}[b]{0.32\linewidth}   
    \centering 
    \includegraphics[width=\textwidth]{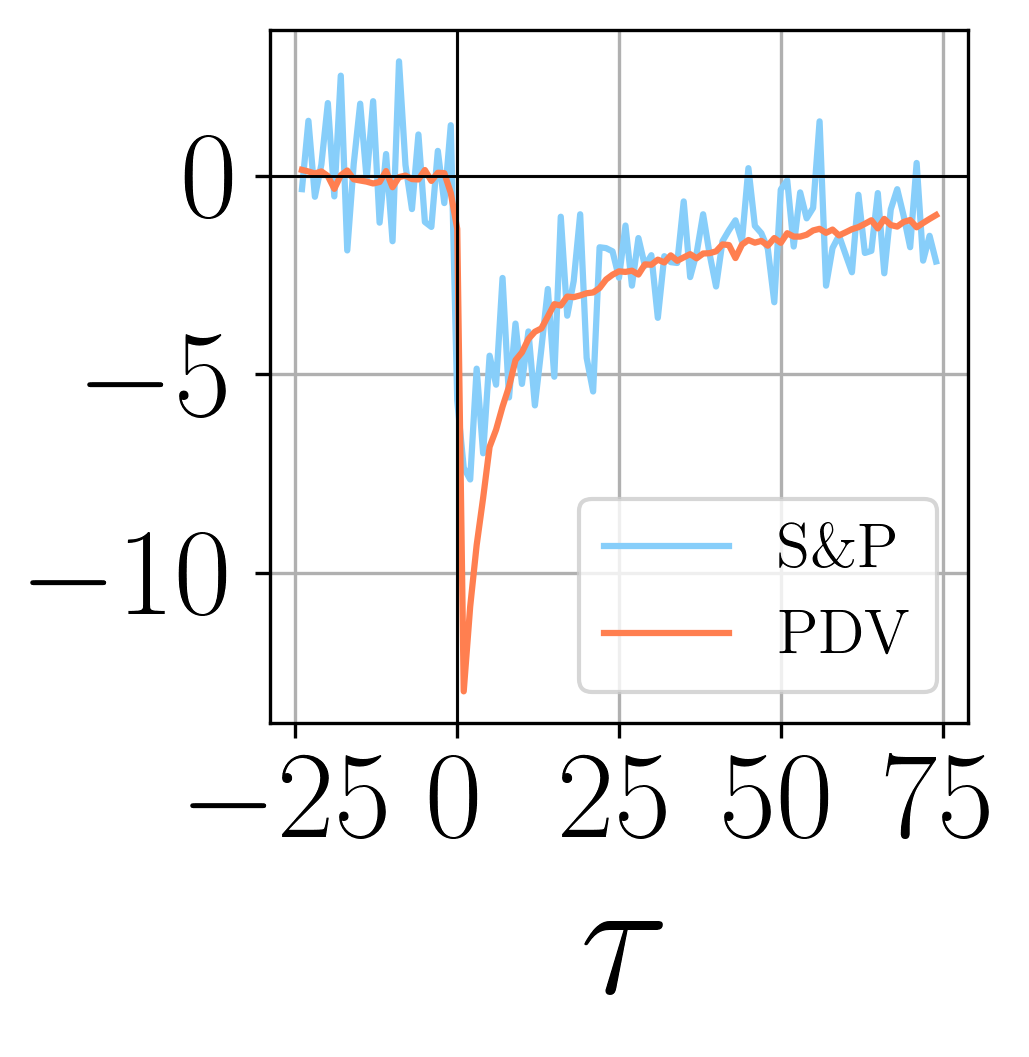}
    \caption{Leverage}
    \label{fig:pdv-leverage}
\end{subfigure}
\caption{S\&P returns vs. synthetic returns using the Path-Dependent Volatility (PDV) model calibrated on Scattering Spectra (top), and standard statistics in the same PDV model (orange) of the S\&P (blue):
(a) Histogram of daily log-returns $\delta x$. (b) Structure functions $\E\{|\delta_\ell x(t)|^q\}$. (c) Leverage correlation $\E\{\delta x(t-\tau)|\delta x(t)|^2\}$. Compare with the results shown in Fig. ~\ref{fig:stat-our-model}.}
\label{fig:stat-pdv}
\end{figure}

\section{Proof of theorem \ref{theo:shadowing-MC}}
\label{app:proofs}

Let us write $w_k=c_n g_\eta(x^k-\sple{x})$ where 
\[
g_\eta(x)= (\eta \sqrt{2\pi})^{-N'} e^{-\frac1{2\eta^2}\|\pa{x}\|^2}
\]
is a Gaussian kernel with $N'$ being the dimension of $\pa{x}$ and $c_n$ is such that $\frac1n\sum_k w_k = 1$.
We write $\hat q$ the estimator 
\[
\hat q = \frac{1}{n}\sum_{k=1}^n w_k q(x^k).
\]
We prove the convergence of $\hat q$ to $\E\{q(x) ~|~ \pa{x} = \pa{\sple{x}}\}$ almost surely by first taking the limit $n\to+\infty$ and then $\eta\to0$.

\noindent\textbf{Limit $n\to\infty$.} One can calculate $c_n^{-1}=\frac1n\sum g_\eta(x^k)$ so that one has
\[
\hat q = \frac{\frac1n \sum_{k=1}^n g_\eta(x^k - \sple{x}) q(x^k)}{\frac1n \sum_{k=1}^n g_\eta(x^k-\sple{x})}.
\]
Since $g_\eta$ is bounded one has $\E\{g_\eta(x-\sple{x}) |q(x)|\} < +\infty$ and $\E\{g_\eta(x-\sple{x})\}< +\infty$. From the law of large numbers, knowing that $\E\{g_\eta(x-\sple{x})\} > 0$, it follows
\[
\hat q \underset{n\to+\infty}{\longrightarrow} \frac{\E\{g_\eta(x-\sple{x})q(x)\}}{\E\{g_\eta(x-\sple{x})\}}.
\]

\noindent\textbf{Limit $\eta\to0$.}
We will make use of the following lemma of approximation by convolution, proved for example in~\cite{evans2022partial}.
\begin{lemma}
If $f\in\mathcal{C}^0\cap L^1(dx)$ then for all $\sple{x}\in\R^N$
\[
g_\eta\star f(\sple{x}) \underset{\eta\to0}{\longrightarrow} \int f(\pa{\sple{x}}, \fu{x}) {\rm d}\fu{x}.
\]
\end{lemma}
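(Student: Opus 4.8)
The plan is to reduce the degenerate, past-only convolution $g_\eta\star f$ on $\R^N$ to an honest Gaussian approximate-identity convolution on the past coordinate space $\R^{N'}$, and then invoke classical mollifier convergence. First I would unfold the $N$-dimensional convolution, use that $g_\eta$ depends only on the past block, and integrate out the future coordinates by Fubini. This introduces the \emph{past marginal}
\[
h(\pa{x}) := \int f(\pa{x}, \fu{x})\,{\rm d}\fu{x},
\]
so that the claimed limit $\int f(\pa{\sple{x}}, \fu{x})\,{\rm d}\fu{x}$ is exactly $h(\pa{\sple{x}})$, and the whole statement becomes the assertion that $g_\eta$ acts as an approximate identity at the point $\pa{\sple{x}}$.

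The key steps, in order, are as follows. First, since $f\in L^1(\R^N)$ and $g_\eta$ is bounded, Fubini applies and yields $h\in L^1(\R^{N'})$ with $\|h\|_{L^1}\le\|f\|_{L^1}$, together with (viewing $g_\eta$ as a Gaussian density on $\R^{N'}$)
\[
g_\eta\star f(\sple{x}) = \int_{\R^{N'}} g_\eta(\pa{\sple{x}} - \pa{x})\,h(\pa{x})\,{\rm d}\pa{x}.
\]
Second, I would record that $\{g_\eta\}_{\eta>0}$ is a Gaussian approximate identity on $\R^{N'}$: it is nonnegative, has unit mass $\int g_\eta=1$, and concentrates at the origin, i.e. $\int_{\|\pa{z}\|\ge\delta} g_\eta(\pa{z})\,{\rm d}\pa{z}\to0$ as $\eta\to0$ for every fixed $\delta>0$ (rescale $\pa{z}=\eta\pa{v}$ and use the Gaussian tail). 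Third, using unit mass to center the integrand,
\[
g_\eta\star f(\sple{x}) - h(\pa{\sple{x}}) = \int_{\R^{N'}} g_\eta(\pa{z})\big[h(\pa{\sple{x}}-\pa{z}) - h(\pa{\sple{x}})\big]\,{\rm d}\pa{z},
\]
I would split the integral into $\{\|\pa{z}\|<\delta\}$, bounded by $\eps$ using continuity of $h$ at $\pa{\sple{x}}$, and $\{\|\pa{z}\|\ge\delta\}$, bounded by $\|h\|_{L^1}\sup_{\|\pa{z}\|\ge\delta}g_\eta(\pa{z}) + |h(\pa{\sple{x}})|\int_{\|\pa{z}\|\ge\delta}g_\eta$, which vanishes as $\eta\to0$. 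Taking $\limsup_{\eta\to0}$ for fixed $\delta=\delta(\eps)$ and then $\eps\to0$ gives the result.

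The main obstacle is the \emph{pointwise}, ``for all $\sple{x}$'', nature of the statement: the third step needs $h$ continuous at $\pa{\sple{x}}$, and mere continuity of $f$ does not force continuity of its marginal $h$ without some local integrable domination of $f$. Securing this continuity is the hard part. I would obtain it by a dominated-convergence argument using a local integrable envelope for $f$ near $\pa{\sple{x}}$; in the intended uses this is automatic, since $f=q\,p$ or $f=p$ with $p$ of the exponential form \eqref{eq:exp-model} (hence rapidly decaying) and $q$ continuous with $\E\{|q(x)|\}<\infty$, so the marginals are continuous everywhere. Absent such an envelope, one falls back on the general $L^1$ theory: $g_\eta\star h\to h$ at every Lebesgue point of $h$, hence for Lebesgue-almost every $\sple{x}$, which already suffices for the convergence claimed in Theorem~\ref{theo:shadowing-MC} at the fixed observed realization $\sple{x}$.
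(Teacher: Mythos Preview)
The paper does not supply its own proof of this lemma; it simply cites a standard PDE reference for approximation by convolution. Your argument---Fubini to reduce to the past marginal $h(\pa{x})=\int f(\pa{x},\fu{x})\,{\rm d}\fu{x}$, then the classical near/far mollifier split on $\R^{N'}$---is exactly the natural way to make the degenerate (past-only) convolution precise, and the bounds you write for each piece are correct.

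More to the point, the concern you raise is genuine: the lemma as stated, with ``for all $\sple{x}\in\R^N$'', is slightly too strong. Continuity of $f$ alone does not force continuity of the marginal $h$; one can build $f\in\mathcal{C}^0\cap L^1(\R^2)$ whose future-mass drifts off to infinity as the past coordinate approaches a point, so that $h$ has a jump there and $g_\eta\star f$ converges to the wrong value at that single $\sple{x}$. Your two remedies---either secure continuity of $h$ via a local integrable envelope (which is reasonable for the intended $f=p$ and $f=qp$ with $p$ of exponential form), or retreat to Lebesgue-point convergence and hence a.e.\ $\sple{x}$---are both sound, and either suffices for the use made of the lemma in the proof of Theorem~\ref{theo:shadowing-MC}. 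Your treatment is therefore more careful than the paper's bare citation.
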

Let us notice that $\E\{g_\eta(x-\sple{x})\} = \int g_\eta(\sple{x}-x) p(x) {\rm d}x = g_\eta\star p (\sple{x})$, and $\E\{g_\eta(x-\sple{x})q(x)\} = g_\eta\star (qp) (\sple{x})$. 
Since $\E\{q(x)\}<+\infty$ one has $qp\in L^1(dx)$, $p$ being a probability distribution one also has $p\in L^1(dx)$.
From the lemma we get:
\[
\frac{\E\{g_\eta(x-\sple{x})q(x)\}}{\E\{g_\eta(x-\sple{x})\}} 
\underset{\eta\to0}{\longrightarrow} 
\frac{\int q(\pa{\sple{x}},\fu{x})p(\pa{\sple{x}},\fu{x}) {\rm d}\fu{x}}{\int p(\pa{\sple{x}},\fu{x}) {\rm d}\fu{x}} ,
\]
where the denominator is non-zero because $p(x)>0$ for all $x\in\R^N$.
The former term being $\E\{q(x) ~|~ \pa{x}=\pa{\sple{x}}\}$, this proves the theorem.

\section{Choice of past embedding $h$}
\label{app:choice-of-h}

The following proposition shows that the choice of $h$ in our paper (\ref{eq:choice-of-h}) induces equivariance properties on the set of shadowing paths $H_\eta(\pa{\sple{x}})$. We recall that we chose $\eta=\widehat\eta\|h(\pa{\sple{x}})\|$ for a fixed $\widehat\eta$.
These equivariance properties are proved on continuously sampled paths $\pa{x}=(x(t), t<0)$ with $t\in\R$. In that case, we still write $h_{\alpha,\beta}$ the continuously sampled analogue 
\[
h_{\alpha,\beta}(x) = \Big( \frac{ x(0) - x(-\ell)}{\ell^\beta} 
~ , ~ \ell=\alpha^m  ~ , ~ m\in\Z \Big)
\]
which is now of infinite dimension.
\begin{proposition}
For $h=h_{\alpha,\beta}$ with $\alpha>0$ and $\beta>1$ one has
\\
1. (Multiplication equivariance) for $\lambda > 0$
\[
H_\eta(\lambda\,\pa{\sple{x}}) = \lambda.\sha
\]
\\
2. (Dilation equivariance) writing $\Gamma x(t) = x(\alpha t)$
\[
H_\eta(\Gamma\pa{\sple{x}}) = \Gamma.\sha
\]
\end{proposition}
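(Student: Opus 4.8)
The plan is to reduce both equivariances to a change of variables inside the defining inequality of $\sha$, exploiting that $h_{\alpha,\beta}$ is linear in its argument and that the dilation $\Gamma$ acts on it as a scaled bilateral shift of coordinates. The crucial enabling fact is the normalization $\eta=\widehat\eta\|h(\pa{\sple{x}})\|$: it is what forces the threshold to rescale in step with the data, and without it neither identity would hold.

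First I would record the linearity $h(\lambda z)=\lambda h(z)$, which is immediate since each coordinate $(z(0)-z(-\ell))/\ell^\beta$ is linear in $z$. For the multiplication part, with the convention $\eta=\widehat\eta\|h(\cdot)\|$ the threshold attached to $\lambda\pa{\sple{x}}$ is $\widehat\eta\|h(\lambda\pa{\sple{x}})\|=\lambda\,\widehat\eta\|h(\pa{\sple{x}})\|=\lambda\eta$. Substituting $x=\lambda y$ and using linearity, the membership condition $\|h(\pa{x})-h(\lambda\pa{\sple{x}})\|<\lambda\eta$ becomes $\lambda\|h(\pa{y})-h(\pa{\sple{x}})\|<\lambda\eta$, which upon dividing by $\lambda>0$ is exactly $y\in\sha$. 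Hence $H_\eta(\lambda\pa{\sple{x}})=\lambda\cdot\sha$.

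The dilation case is the crux, and the key computation is the action of $\Gamma$ on $h$. Since $(\Gamma z)(0)=z(0)$ and $(\Gamma z)(-\ell)=z(-\alpha\ell)$, and writing $\ell=\alpha^m$, the $m$-th coordinate of $h(\Gamma z)$ equals $(z(0)-z(-\alpha^{m+1}))/\alpha^{m\beta}=\alpha^\beta\,(z(0)-z(-\alpha^{m+1}))/\alpha^{(m+1)\beta}$, i.e. $\alpha^\beta$ times the $(m+1)$-th coordinate of $h(z)$. Denoting by $S$ the bilateral shift $(Sv)_m=v_{m+1}$, this reads $h(\Gamma z)=\alpha^\beta S\,h(z)$. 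Because $m$ ranges over all of $\Z$, $S$ is a bijective isometry, so $\|Sv\|=\|v\|$; this is precisely why the statement is phrased for the continuously sampled, infinite-dimensional embedding, where no coordinate is lost at a boundary. In particular $\|h(\Gamma\pa{\sple{x}})\|=\alpha^\beta\|h(\pa{\sple{x}})\|$, so the threshold attached to $\Gamma\pa{\sple{x}}$ is $\alpha^\beta\eta$. Using $\pa{(\Gamma y)}=\Gamma\,\pa{y}$ (dilation maps past to past), I would substitute $x=\Gamma y$: the inequality becomes $\alpha^\beta\|S(h(\pa{y})-h(\pa{\sple{x}}))\|<\alpha^\beta\eta$, the isometry collapses the left side to $\alpha^\beta\|h(\pa{y})-h(\pa{\sple{x}})\|$, and dividing by $\alpha^\beta$ recovers $y\in\sha$. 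Thus $H_\eta(\Gamma\pa{\sple{x}})=\Gamma\cdot\sha$.

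The main obstacle is the index bookkeeping in the dilation step: one must verify that $\Gamma$ acts as a clean scaled shift rather than a messier operator, which relies on the lags $\ell=\alpha^m$ forming a geometric progression closed under multiplication by $\alpha$ and on the index set being bi-infinite so that $S$ neither drops nor creates coordinates. The only other point, more a standing assumption than a difficulty, is finiteness of the embedding norm $\|h(\pa{\sple{x}})\|$ and hence exactness of the identity $\|S\,\cdot\,\|=\|\cdot\|$; the weight $\ell^{-\beta}$ with $\beta>1$ suppresses the distant-past coordinates, and one takes the remaining near-past coordinates to be summable in the continuous idealization, after which every manipulation above is exact.
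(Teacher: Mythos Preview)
Your proof is correct and follows essentially the same route as the paper's: linearity of $h_{\alpha,\beta}$ for the multiplication part, and the identity $h(\Gamma z)=\alpha^\beta\,S\,h(z)$ with $S$ the bilateral index shift for the dilation part, combined with the normalization $\eta=\widehat\eta\|h(\pa{\sple{x}})\|$ so that the threshold rescales accordingly. Your version is in fact more explicit than the paper's---you name the shift operator, point out that its isometry depends on the bi-infinite index set $m\in\Z$, and remark that $\beta>1$ is what makes the embedding norm finite---whereas the paper simply states that $h(\Gamma\pa{x})$ equals $h(\pa{x})$ ``up to a shift in indices and up to a multiplicative constant'' and passes directly to the norm identity.
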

\begin{proof}[Proof]
The first equivariance follows directly from the fact that $h_{\alpha,\beta}$ is itself equivariant to multiplication
$h_{\alpha,\beta}(\lambda \pa{x}) = \lambda\,h_{\alpha,\beta}(\pa{x})$.
For dilation, one has 
\begin{align*}
 h_{\alpha,\beta}(\Gamma\pa{x}) & = \Big( \frac{ x(0) - x(-\alpha\ell)}{\ell^\beta} 
~ , ~ \ell=\alpha^m  ~ , ~ m\in\Z \Big)
 \\
 & = \alpha^\beta \Big( \frac{ x(0) - x(-\ell)}{\ell^\beta} 
~ , ~ \ell=\alpha^{m+1}  ~ , ~ m\in\Z \Big)
\end{align*}
This means that $h_{\alpha,\beta}(\Gamma\pa{x})$ is equal to $h_{\alpha,\beta}(\pa{x})$ up to a shift in indices and up to a multiplicative constant.
It follows that 
\[
\| h_{\alpha,\beta}(\Gamma\pa{x}) -  h_{\alpha,\beta}(\Gamma\pa{\sple{x}})\| = \alpha^{\beta}\| h_{\alpha,\beta}(\pa{x}) -  h_{\alpha,\beta}(\pa{\sple{x}})\|.
\]
Now, normalizing by $\|h_{\alpha,\beta}(\Gamma \pa{\sple{x}})\| = \alpha^\beta \|h_{\alpha,\beta}( \pa{\sple{x}})\|$ yields $H_\eta(\Gamma\pa{\sple{x}}) = \Gamma.\sha$.
\end{proof}

\section{Additional trading game statistics}
\label{app:trading-game}

In addition to the P\&Ls and aggregated P\&Ls of a trading game played against the option market shown in Figs.~\ref{fig:smile-prediction-avg},\ref{fig:smile-prediction-aggregated}, we show in Figs.~\ref{fig:smile-prediction-std},\ref{fig:smile-prediction-win} the standard deviation and winning rate, defined as the average number of times
the payoff of a trade is positive.

In the remaining figures, we also show the statistical results of the trading game between PDV and SS model. These results do not require option market data (but require actual price series of the underlying) and directly test the relative quality of purely statistical price models. As seen in Fig.~\ref{fig:pdv_ssb_game}, trading game  unequivocally favours the Scattering Spectra model framework over the Path-Dependent Volatility model.
\begin{figure}
\centering
\includegraphics[width=0.9\linewidth]{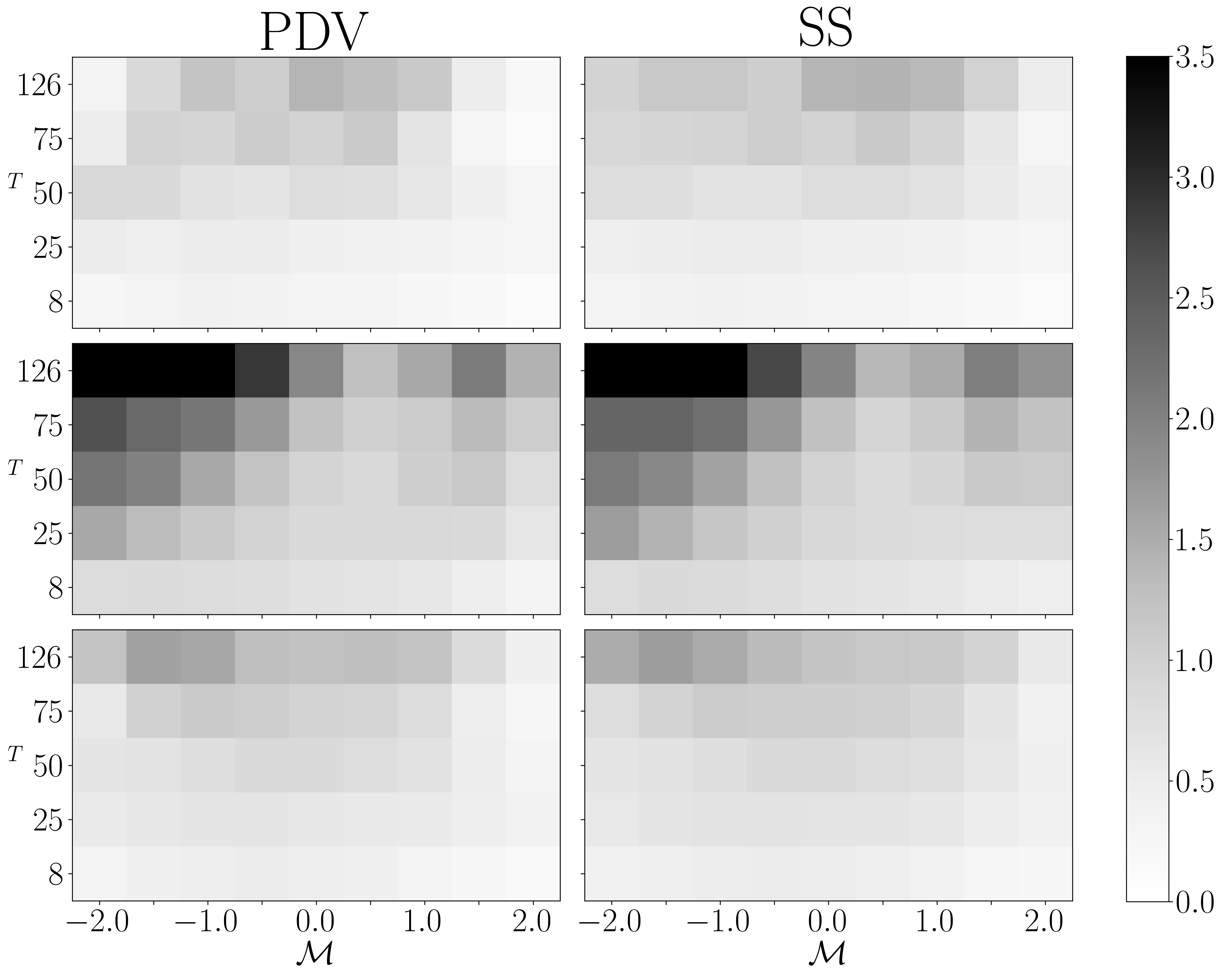}
\caption{Standard deviation of $\mathrm{P\&Ls}$ of a trading game playing the Scattering Spectra (SS) model vs. S\&P or the Path-Dependent Volatility (PDV) model vs. S\&P.
Each heatmap corresponds to a 3 years period, from top to bottom (2015-2017, 2018-2020, 2021-2023). }
\label{fig:smile-prediction-std}
\end{figure}
\begin{figure}
\centering
\includegraphics[width=0.9\linewidth]{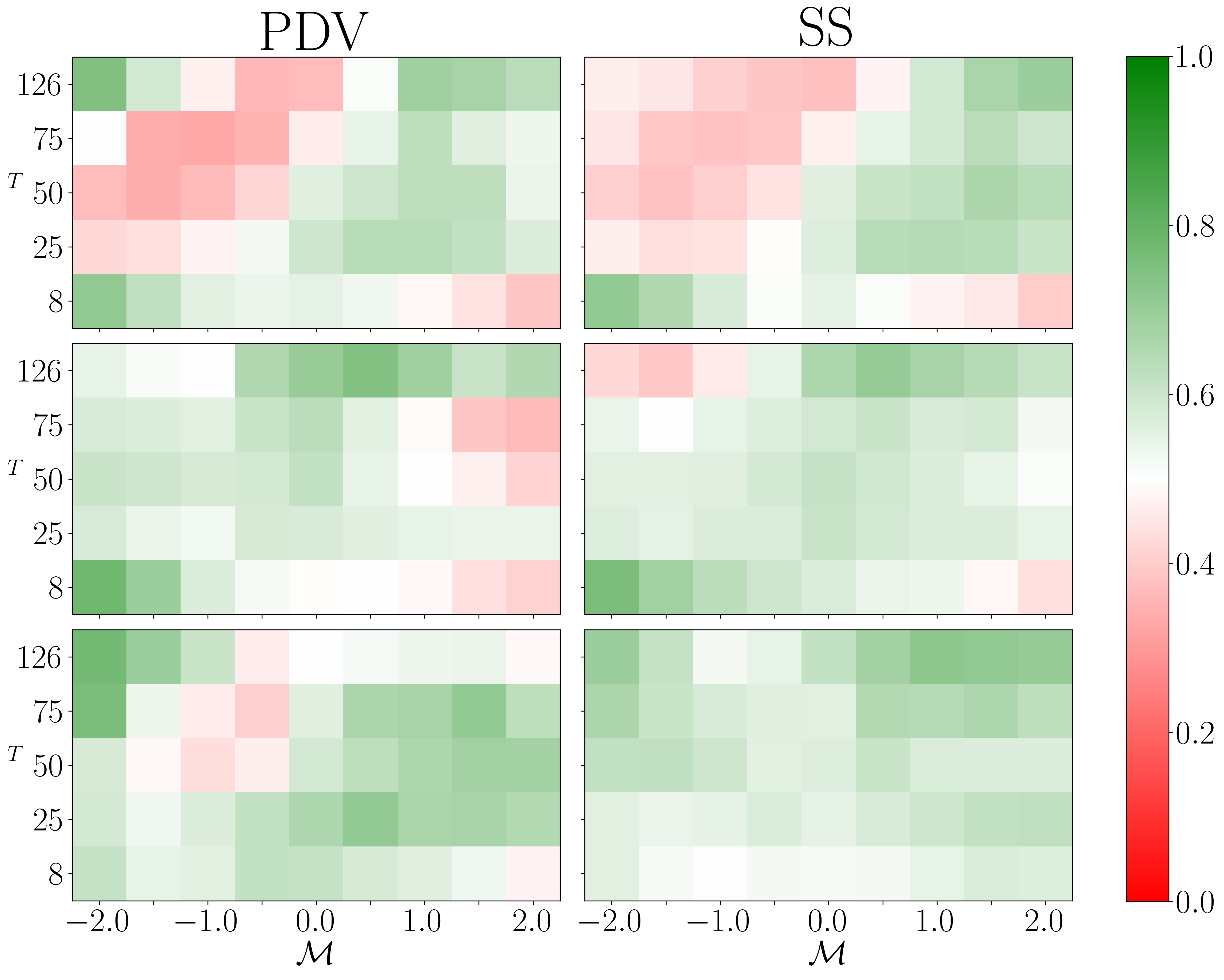} 
\caption{Rate of winning trades of a trading game playing the Scattering Spectra (SS) model vs. S\&P or the Path-Dependent Volatility (PDV) model vs. S\&P.
Each heatmap corresponds to a 3 years period, from top to bottom (2015-2017, 2018-2020, 2021-2023).}
\label{fig:smile-prediction-win}
\end{figure}
\begin{figure}
\centering
\includegraphics[width=1\linewidth]{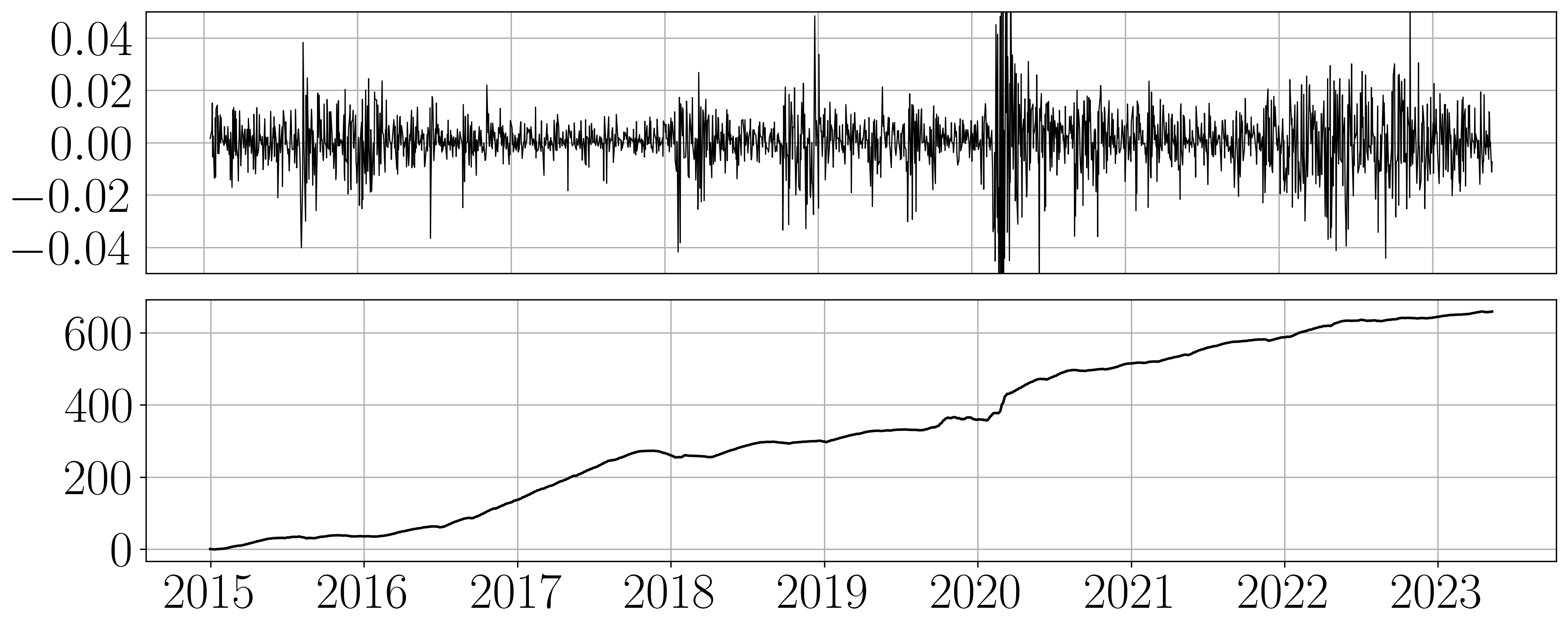}
\caption{
(Bottom) Aggregated $\mathrm{P\&L}$ of a trading game playing the Scattering Spectra model vs. the Path-Dependent Volatility model (re-optimized for each $T$), from 2015 to 2023, both calibrated using data from 2000 to 2014.
(Top) Time series of S\&P log-return on the trading game period.}
\label{fig:pdv_ssb_game}
\end{figure}
\begin{figure}
\centering
\includegraphics[width=0.7\linewidth]{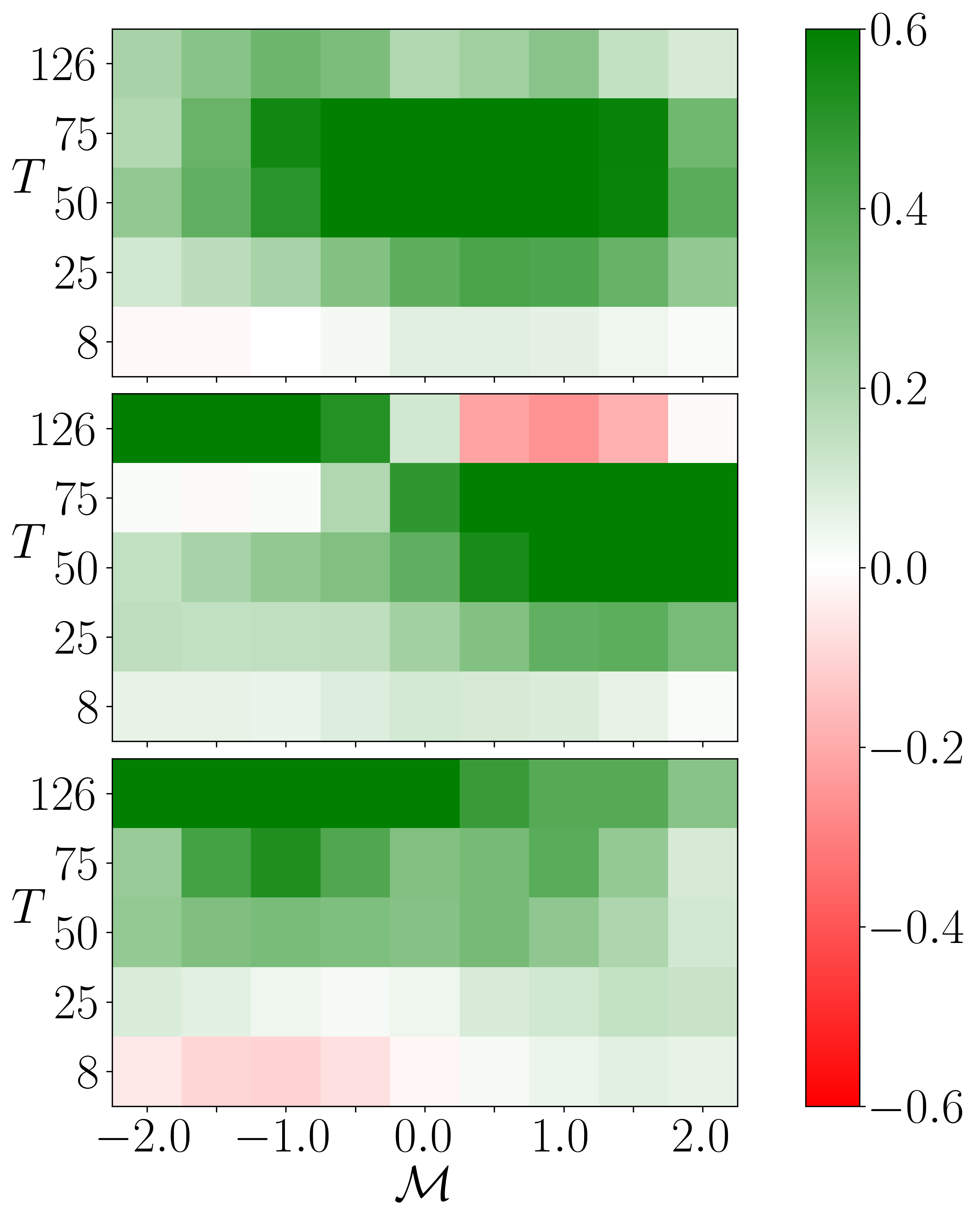} 
\caption{$\mathrm{P\&Ls}$ in a trading game playing the Scattering Spectra model against the Path-Dependent Volatility model.
Each heatmap corresponds to a 3 years period, from top to bottom (2015-2017, 2018-2020, 2021-2023).}
\end{figure}
\begin{figure}
\centering
\includegraphics[width=0.7\linewidth]{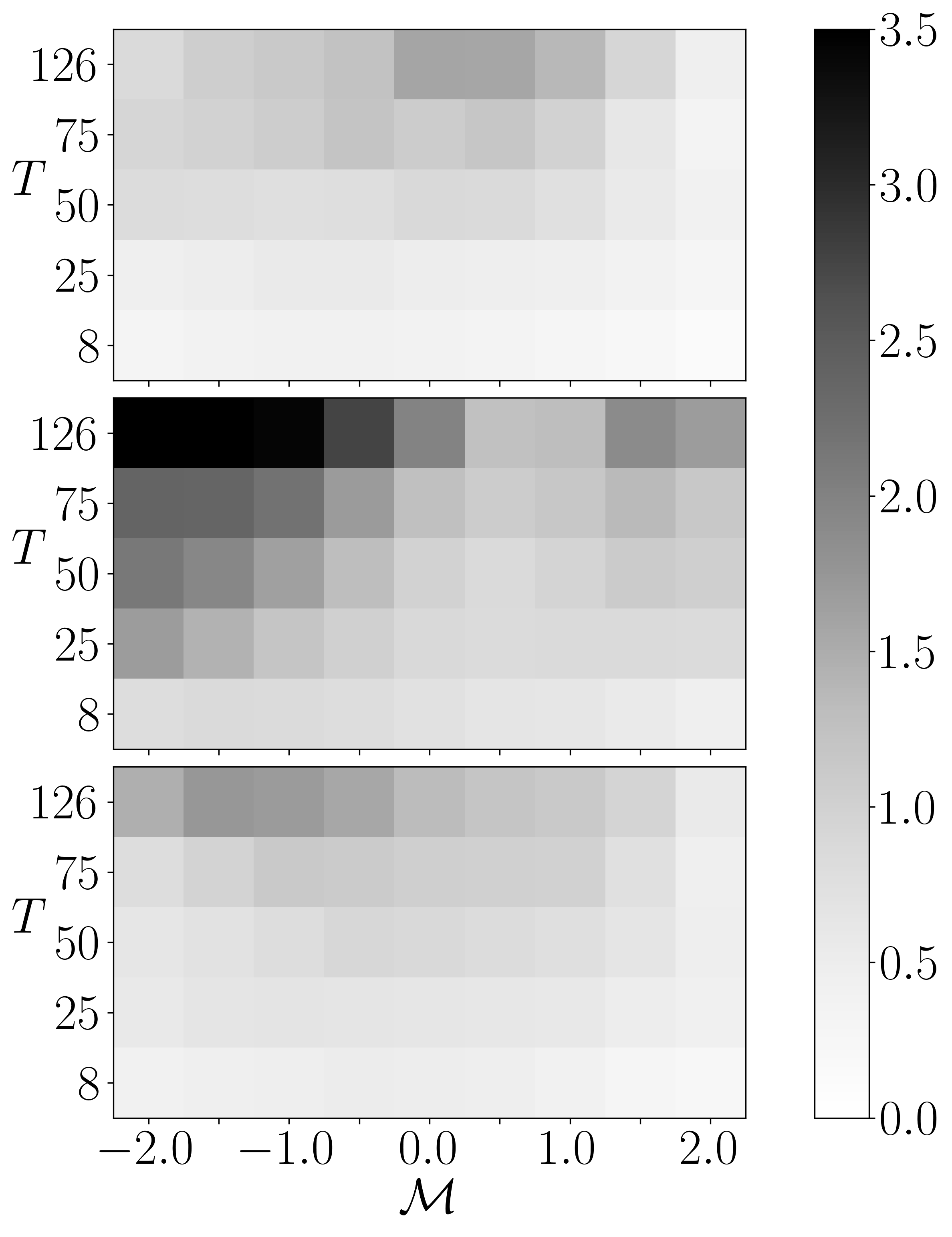} 
\caption{Standard deviation of $\mathrm{P\&Ls}$ in a trading game playing the Scattering Spectra model against the Path-Dependent Volatility (PDV) model.
Each heatmap corresponds to a 3 years period, from top to bottom (2015-2017, 2018-2020, 2021-2023).}
\end{figure}
\begin{figure}
\centering
\includegraphics[width=0.7\linewidth]{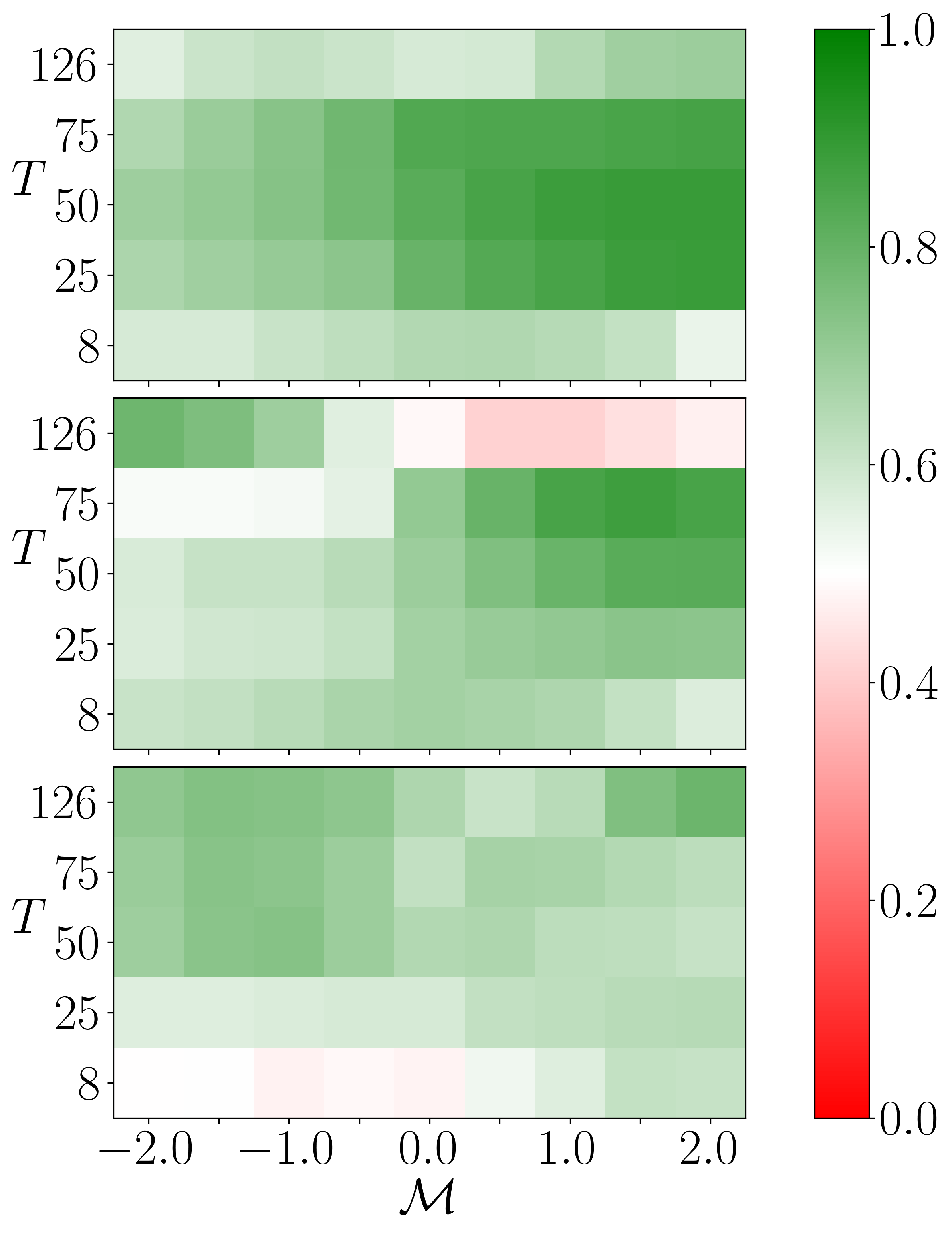} 
\caption{Percentage of winning trades of a trading game playing the Scattering Spectra model against Path-Dependent Volatility model. 
Each heatmap corresponds to a 3 years period, from top to bottom (2015-2017, 2018-2020, 2021-2023).}
\end{figure}

\end{document}